\newtheorem{theorem}{Theorem}[section]
\newtheorem{corollary}[theorem]{Corollary}
\newtheorem{myclaim}[theorem]{Claim}
\def\marrow{\marginpar[\hfill$\longrightarrow$]{$\longleftarrow$}}
\def\rom#1{\textsf{(Rom says: \marrow{#1})}}
\def\gila#1{\textsf{(Gila says: \marrow{#1})}}
\newcommand{\old}[1]{{{}}}
\def\segment#1#2{{\overline{#1#2}}}
\def\UDG{{U\!DG}}
\def\VR#1{{{V\!R}_{#1}}}
\def\VP#1{{{V\!P}_{#1}}}
\newcommand{\reals}{\ensuremath{\mathbb{R}}}
\begin{document}

\title{Approximation Schemes for Covering and Packing\thanks{%
Work by R. Aschner was partially supported by the Lynn and William Frankel Center for Computer Sciences. Work by R. Aschner and M.J. Katz was partially supported by the Israel Ministry of Industry, Trade and Labor (consortium CORNET).
Work by M.J. Katz was partially supported by grant 1045/10 from the Israel Science Foundation, and by grant 2010074 from the United States -- Israel Binational Science Foundation.
Work by G. Morgenstern was partially supported by the Caesarea Rothschild Institute (CRI).}}



\author[1]{Rom Aschner}
\author[1]{Matthew J. Katz}
\author[2]{Gila Morgenstern}
\author[1]{Yelena Yuditsky}

\affil[1]{{\small Department of Computer Science, Ben-Gurion University}}
\affil[ ]{{\small {\tt \{romas,matya,yuditsky\}@cs.bgu.ac.il}}}

\affil[2]{{\small Caesarea Rothschild Institute, University of Haifa} 
}
\affil[ ]{{\small {\tt gilamor@cri.haifa.ac.il}}}

\maketitle

\begin{abstract}
The local search framework for obtaining PTASs for NP-hard geometric optimization problems was introduced, independently, by Chan and Har-Peled~\cite{ChanH09} and Mustafa and Ray~\cite{MR10}. In this paper, we generalize the framework by extending its analysis to additional families of graphs, beyond the family of planar graphs. We then present several applications of the generalized framework, some of which are very different from those presented to date (using the original framework).
These applications include PTASs for finding a \emph{maximum $l$-shallow set} of a set of fat objects, for finding a \emph{maximum triangle matching} in an $l$-shallow unit disk graph, and for \emph{vertex-guarding} a (not-necessarily-simple) polygon under an appropriate shallowness assumption.

We also present a PTAS (using the original framework) for the important problem where one has to find a minimum-cardinality subset of a given set of disks (of varying radii) that covers a given set of points, and apply it to a class cover problem (studied in~\cite{BeregCDPSV10}) to obtain an improved solution.
\end{abstract}

\section{Introduction}
In their break-through papers, Chan and Har-Peled~\cite{ChanH09} and Mustafa and Ray \cite{MR10}
showed, independently, how a simple local-search-based algorithm can be employed to obtain a PTAS for an NP-hard geometric optimization problem.
Chan and Har-Peled \cite{ChanH09} used local search
to obtain a PTAS for finding a maximum independent set of pseudo disks, and
Mustafa and Ray \cite{MR10} used it to obtain a PTAS for finding a minimum
hitting set (from a given set of points) for half-spaces in $\reals ^3$ and for $r$-admissible regions in $\reals ^2$.
This technique turned out to be very powerful, and since its publication, it was applied to a variety of additional problems.
Gibson  et al.~ \cite{GKKV09} used it to obtain a PTAS for the $1.5$D terrain guarding problem,
and Gibson and Pirwani used it to obtain a PTAS for finding a dominating set in disk graphs~\cite{GP10}.

The local-search-based algorithm, as described in~\cite{ChanH09,MR10}, receives an integer parameter $k$ and proceeds as follows.
It starts with any feasible solution and performs a series of local improvements, where each such improvement involves only $O(k)$ objects.
The analysis relies on the existence of a
planar bipartite graph $G$, whose vertices on one side correspond to the objects found by the local search algorithm (``blue vertices'')
and on the other side to the objects in an optimal solution (``red vertices''), and whose (``blue-red'') edges
satisfy an appropriate \emph{locality property} relating the two solutions.
Chan \& Har-Peled and Mustafa \& Ray showed that for the problems mentioned above such a planar graph $G$ exists,
and applied the planar separator theorem to
relate the size of the local search solution with that of the optimal solution.

In this paper we generalize the local search technique by extending its analysis to additional families of graphs.
We show that one can achieve a PTAS even in cases where the graph whose vertices are the elements of the two solutions
and which satisfies the locality property, is \textbf{not} planar, but belongs to a family of graphs that has a separator property.
It is well known that there are many
such families of graphs,
e.g., graphs with forbidden minors~\cite{ASTforbiddenminor} and
various intersection graphs~\cite{FPseparator,MillerTTV97,SmithW98}.
We also present several interesting applications of our extended analysis, some of which are
very different from those presented in the past (using the original analysis).

These applications, which are presented in Section~\ref{sect:apps}, include finding a \emph{maximum $l$-shallow set} of a set of fat objects,
finding a \emph{maximum triangle matching} in a unit disk graph, and \emph{guarding a polygon with limited range of sight}. We briefly discuss each of them.

\noindent \textbf{\textit{Maximum $l$-shallow set}}.
Let $D$ be a set of $n$ fat objects in $\reals ^2$ and let $l > 0$ be a constant. An {\em $l$-shallow}
subset of $D$ is a subset $S$ of $D$, such that the depth of the arrangement of $S$ is at most $l$, i.e. every point in the plane is covered by at most $l$ objects of $S$.
In the maximum $l$-shallow subset problem, one is asked to find a maximum-cardinality $l$-shallow subset $S$ of $D$.
Notice that for $l=1$, $S$ is a maximum independent subset of $D$.
Chan~\cite{Chan03} presented a PTAS for the problem of finding a maximum independent subset of $D$.
Later, as mentioned above, Chan and Har-Peled~\cite{ChanH09} presented a local-search-based PTAS for finding
a maximum independent set of pseudo-disks (and also of $D$).
Notice that a maximum $l$-shallow set can be larger than the set that is obtained by repeatedly finding a maximum independent set of the set of remaining objects.
We show that our generalized analysis enables us to obtain a PTAS for the maximum $l$-shallow subset problem, and emphasize that it is essential whenever $l > 1$.

\noindent \textbf{\textit{Maximum triangle matching}}.
Given a graph $G$, one has to find a maximum-cardinality collection of vertex-disjoint triangles in $G$.
Baker~\cite{B94} presented a PTAS for the important case where $G$ is planar.
We give a PTAS for the case where $G$ is an $l$-shallow unit disk graph. This PTAS also holds in more general settings, see below.

\noindent \textbf{\textit{Guarding}}.
We apply our generalized analysis to several guarding problems.
Many guarding problems are known to be APX-hard (e.g., guarding a simple $n$-gon $P$ with as few vertex guards as possible~\cite{ESW01}), and as such do not admit a PTAS.
Ghosh~\cite{Go87} presented an $O(\log n)$-approximation algorithm for vertex-guarding a polygon. Subsequently, Efrat and Har-Peled~\cite{EH-P06} presented an $O(\log |OPT|)$-approximation algorithm for this problem. In contrast to these results, we obtain a PTAS for this problem, under the assumption that each vertex guard $g$ has a limited range of sight $r_g$, such that every point in $P$ is covered by at least one of the guards and the set of disks centered at the guards is $l$-shallow.
We also discuss several other versions, including guarding through walls and guarding a 1.5D terrain with bounded range of sight.

Finally, in Section~\ref{sec:coverage}, we consider the important problem known as discrete coverage of points by disks, for which we obtain a PTAS within the original local search framework.
Let $P$ be a set of points in the plane and let $D$ be a set of disks that covers $P$.
One needs to find a minimum-cardinality subset $D' \subseteq D$ that covers $P$.
Notice that the special case where $D$ is a set of unit disks, is the dual of the discrete hitting set problem for unit disks,
for which Mustafa and Ray~\cite{MR10} presented a PTAS (actually, for arbitrary disks).
However, in the general case, where $D$ is a set of disks of varying radii,
coverage and hitting are not dual. 
We present a PTAS for the general case that is inspired by the work of Gibson and Pirwani~\cite{GP10}.
We also apply this result to the a class cover problem, improving a result of Bereg et al.~\cite{BeregCDPSV10}.

\section{PTAS via local search}\label{sec:general}
We begin by generalizing the local search technique to additional families of graphs, beyond the family of planar graphs, such as intersection graphs and graphs with forbidden minors. Actually, we describe the technique for any family of graphs that has a separator property, similar to the separator property of planar graphs.

Let $\cal {F}$ be a monotone family of graphs, i.e., all subgraphs of a graph $G \in {\cal F}$ are also in ${\cal F}$. Assume that ${\cal F}$ has a separator property, i.e., for any graph $G=(V,E)$ in ${\cal F}$, one can partition the vertex set $V$, where $|V|=n$, into three sets $A$,$B$ and $S$, such that (i) $|S| \le c n^{1-\delta}$, where $0 < \delta < 1$ and $c > 0$ is a constant, (ii) $|A|,|B| \le \alpha n$, where $1/2 \le \alpha < 1$, and (iii) the sets $A$ and $B$ are disconnected, i.e., there is no edge in $E$ between a vertex of $A$ and a vertex of $B$.
\old{
\begin{itemize}
\item $|S| \le c n^{1-\delta}$, where $0 < \delta < 1$ and $c > 0$ is a constant.
\item $|A|,|B| \le \alpha n$, where $1/2 \le \alpha < 1$.
\item The sets $A$ and $B$ are disconnected, i.e., there is no edge in $E$ between a vertex of $A$ and a vertex of $B$.
\end{itemize}
}

Frederickson~\cite{Fred87} defined the notion of an $r$-division for planar graphs, and showed how to obtain an $r$-division for a given planar graph $G$, by repeatedly applying the planar separator algorithm. It is straight-forward to adapt Frederickson's construction and analysis to the family ${\cal F}$. Essentially, one needs only to replace the exponent $1/2$ in the size of the separator by $1-\delta$.
Thus, ${\cal F}$ has the following property. Let $r$ be a parameter, $1 \le r \le n$, then for any connected graph $G=(V,E)$ in ${\cal F}$, one can find a collection of $\Theta(n/r)$ pairwise disjoint subsets $V_1, V_2, \ldots$ of $V$, such that
(i) $|V_i| \le c_2 r$, where $c_2 > 0$ is a constant, (ii) $|\Gamma(V_i)| \le c_3 r^{1-\delta}$, where $\Gamma(V_i)$ is the set of vertices in $V \setminus V_i$ that are adjacent to a vertex in $V_i$ and $c_3 > 0$ is a constant, and (iii) $\cup \Gamma(V_i) = S$, where $S = V \setminus \cup V_i$;  in particular, $\Gamma(V_i) \subseteq S$, and, for any $j \ne i$, $V_i$ and $V_j$ are disconnected.
\old{
\begin{itemize}
\item $|V_i| \le c_2 r$, where $c_2 > 0$ is a constant.
\item $|\Gamma(V_i)| \le c_3 r^{1-\delta}$, where $\Gamma(V_i)$ is the set of vertices in $V \setminus V_i$ that are adjacent to one or more vertex in $V_i$ and $c_3 > 0$ is a constant.
\item $\cup \Gamma(V_i) = S$, where $S = V \setminus \cup V_i$;  in particular, $\Gamma(V_i) \subseteq S$, and, for any $j \ne i$, $V_i$ and $V_j$ are disconnected.
\end{itemize}
}
It follows that $|S| \le \Theta(n/r) \cdot c_3 r^{1-\delta} = c_1 \frac{n}{r^\delta}$, where $c_1 > 0$ is a constant.
Also, it is easy to verify that
\begin{myclaim}
\label{claim:size}
If $r=\frac{k}{c_2+c_3}$, where $k$ is a sufficiently large constant, then,
for any index $i$, $|V_i| + |\Gamma (V_i)| < k$.
\end{myclaim}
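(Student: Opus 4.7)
The plan is to simply substitute $r = k/(c_2+c_3)$ into the two bounds on $|V_i|$ and $|\Gamma(V_i)|$ that were established immediately before the claim, and then observe that the sublinear dependence on $r$ in the second bound lets us absorb both terms into $k$ when $k$ is large enough.

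Concretely, I would first recall that by properties (i) and (ii) in the preceding paragraph we have $|V_i| \le c_2 r$ and $|\Gamma(V_i)| \le c_3 r^{1-\delta}$, so that
\[
|V_i| + |\Gamma(V_i)| \le c_2 r + c_3 r^{1-\delta}.
\]
Plugging in $r = k/(c_2+c_3)$, the first summand becomes $\frac{c_2}{c_2+c_3}\,k$, which is a strict fraction of $k$ leaving a ``slack'' of $\frac{c_3}{c_2+c_3}\,k$. It therefore suffices to show $c_3 r^{1-\delta} < \frac{c_3}{c_2+c_3}\,k = c_3 r$, which after dividing by $c_3$ reduces to $r^{1-\delta} < r$, i.e.\ $r > 1$. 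Since $\delta > 0$ is fixed, this last inequality holds as soon as $k > c_2 + c_3$.

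There is essentially no obstacle here; the only subtle point is to make explicit what ``sufficiently large'' means, namely any $k$ with $k/(c_2+c_3) > 1$, so that the sublinear term $c_3 r^{1-\delta}$ is dominated by the leftover linear slack $\frac{c_3}{c_2+c_3}\,k$. With that bound on $k$, strict inequality $|V_i| + |\Gamma(V_i)| < k$ follows immediately, completing the proof.
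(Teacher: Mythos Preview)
Your proof is correct and is essentially identical to the paper's own (elided) argument: both substitute $r=k/(c_2+c_3)$ into $c_2 r + c_3 r^{1-\delta}$ and observe that the desired inequality reduces to $r^{1-\delta}<r$, i.e.\ $k>c_2+c_3$. The only difference is cosmetic---you phrase it as comparing $c_3 r^{1-\delta}$ to the ``slack'' $c_3 r$, whereas the paper rewrites everything over the common denominator $c_2+c_3$ before making the same comparison.
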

\old{
\begin{proof}
\old{
\begin{eqnarray}
\lefteqn{|V_i| + |\Gamma(V_i) |\leq c_2 r + c_3 r^{1-\delta} = \frac{c_2 k}{c_2 +c_3} + \frac{c_3 k^{1-\delta}} {(c_2+c_3)^{1-\delta}}} \nonumber \\
&=& \frac{c_2 k + c_3  (c_2 +c_3)^\delta  k^{1-\delta}}{c_2+c_3} < \frac {c_2 k + c_3 k}{c_2 + c_3}  = k \ . \nonumber
\end{eqnarray}
}

\begin{align*}
|V_i| + |\Gamma(V_i) | &\leq c_2 r + c_3 r^{1-\delta} = \frac{c_2 k}{c_2 +c_3} + \frac{c_3 k^{1-\delta}} {(c_2+c_3)^{1-\delta}} \\
&= \frac{c_2 k + c_3  (c_2 +c_3)^\delta  k^{1-\delta}}{c_2+c_3} < \frac {c_2 k + c_3 k}{c_2 + c_3}  = k \ .
\end{align*}

\end{proof}
}
We distinguish between minimization problems and maximization problems.

\subsection{Minimization problems}
\label{sect:minimization_problems}
Let us now recall the local search technique as it is used, e.g., in the context of geometric piercing or covering.
Assume that we are considering a minimization problem ${\cal P}$, that is, one needs to find a minimum-cardinality subset of a given set $X$ that is a solution for ${\cal P}$.
Let $X_0 \subseteq X$ be an initial not-necessarily-optimal solution for ${\cal P}$, and let $k$ be a sufficiently large constant.
The local search technique checks whether there exists a subset $X' \subseteq X_0$ of size $k$ and a subset $X'' \subset X$ of size $k-1$, such that $(X_0 \setminus X') \cup X''$ is still a solution for ${\cal P}$. If yes, then it replaces $X'$ by $X''$ (i.e., it performs this local improvement) and resumes the search. Otherwise, it halts.

Let $B$ and $R$ be two solutions for ${\cal P}$, where $B$ was obtained by applying local search, and assume that $B \cap R = \emptyset$. (Otherwise, we can remove the elements that belong to both $B$ and $R$.)
Moreover, let $G=(V,E)$ be a graph, such that (i) $V = B \cup R$, and (ii) for each object $o$ that needs to be ``solved'', there exists an edge $e \in E$ between a vertex $b \in B$ and a vertex $r \in R$, where both $b$ and $r$ ``solve'' $o$. This requirement is called the \emph{locality condition}.
We prove below that if $G$ belongs to some family ${\cal F}$ that has a separator property, then $|B|$ is not much greater than $|R|$.
Parts of the proof appear already in~\cite{ChanH09,MR10} and are included here for completeness.
We first construct an $r$-division of $G$, for $r = \frac{k}{c_2+c_3}$, and set $B_i= B \cap V_i$ and $R_i= R \cap V_i$.

\begin{myclaim}
\label{claim:substition}
For any index $i$, the set $(B \setminus B_i) \cup \Gamma(B_i)$ is also a solution for ${\cal P}$.
\end{myclaim}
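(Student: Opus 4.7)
The plan is to verify the claim by a direct object-by-object argument using the locality condition. I would fix an arbitrary object $o$ that needs to be ``solved'' by the problem $\cal P$, and show that some element of $(B \setminus B_i) \cup \Gamma(B_i)$ solves $o$. By the locality condition imposed on the graph $G=(V,E)$, there exists an edge $(b,r) \in E$ with $b \in B$ and $r \in R$ such that both $b$ and $r$ solve $o$; this edge exists precisely because $G$ was constructed to satisfy this property for every such object.

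Next I would split into two cases according to the position of $b$. In the first case, where $b \in B \setminus B_i$, the vertex $b$ itself already lies in the new set, and $b$ solves $o$, so we are done. In the second case, where $b \in B_i$, the vertex $r$ is adjacent to a vertex of $B_i$ in $G$, so by the definition of the neighborhood, $r \in \Gamma(B_i)$; thus $r$ is included in the new set and $r$ solves $o$. Since $o$ was arbitrary, every object that needs to be solved is solved, and $(B \setminus B_i) \cup \Gamma(B_i)$ is indeed a solution for $\cal P$.

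There is no real obstacle in the argument itself; it is an almost immediate unwinding of the locality condition together with the definition of $\Gamma(\cdot)$. The only mildly delicate point worth flagging is the reading of the notation $\Gamma(B_i)$: I would interpret it as the neighborhood of the subset $B_i$ in $G$, i.e., the set of vertices of $V \setminus B_i$ that are adjacent to some vertex of $B_i$, so that it contains both the $R$-neighbors of $B_i$ lying inside $V_i$ and those lying in the separator $\Gamma(V_i)$. With this reading, the case $b \in B_i$ automatically gives $r \in \Gamma(B_i)$, and the argument closes cleanly.
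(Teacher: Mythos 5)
Your proof is correct and follows essentially the same route as the paper: a case analysis on whether the $B$-vertex solving $o$ lies in $B_i$ or not, using the locality condition to produce an $R$-neighbor in $\Gamma(B_i)$ in the former case. Your reading of $\Gamma(B_i)$ as the $G$-neighborhood of $B_i$ (and the observation that the relevant $r$ lands in it because $B\cap R=\emptyset$) matches the paper's intent.
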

\begin{proof}
Fix $i$ and let $o$ be an object that needs to be ``solved''. If all vertices of $B$ that ``solve'' $o$ belong to $B_i$, then, by the locality condition, there exists $r \in \Gamma(B_i)$ that ``solves'' $o$. Otherwise, there is a vertex $b \in B \setminus B_i$ that ``solves'' $o$. We conclude that in both cases there is a vertex in $(B\setminus B_i) \cup \Gamma (B_i)$ that ``solves'' $o$.
\end{proof}

\begin{myclaim}
For any index $i$, $|B_i| \le |R_i|+|\Gamma(V_i)| < k$.
\end{myclaim}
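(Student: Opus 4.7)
The plan is to prove the two inequalities separately, starting with the easier one on the right. Since $R_i = R\cap V_i \subseteq V_i$, we have
\[
|R_i| + |\Gamma(V_i)| \;\le\; |V_i| + |\Gamma(V_i)| \;<\; k,
\]
where the last step is exactly Claim~\ref{claim:size} applied to the chosen $r = k/(c_2+c_3)$. So the right-hand inequality requires nothing beyond what is already established.

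For the left-hand inequality, $|B_i| \le |R_i|+|\Gamma(V_i)|$, I would argue by contradiction using local-search optimality. Assume $|B_i| > |R_i|+|\Gamma(V_i)|$. My candidate swap is to replace $B_i$ by the set $R^{\ast} := R_i \cup (\Gamma(V_i) \cap R)$, so the new solution is $(B\setminus B_i)\cup R^{\ast}$. Adapting the argument used in the proof of Claim~\ref{claim:substition}, I would verify feasibility object by object: fix an object $o$ that must be solved; if some vertex of $B\setminus B_i$ already solves $o$, we are done. Otherwise every $B$-vertex solving $o$ lies in $B_i$, and the locality condition produces a vertex $r \in R$, adjacent to $B_i$, that solves $o$. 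If $r \in V_i$ then $r\in R_i$; otherwise $r$ is outside $V_i$ but adjacent to a vertex of $V_i \supseteq B_i$, so $r \in \Gamma(V_i) \cap R$. In either case $r \in R^{\ast}$, proving that $(B\setminus B_i)\cup R^{\ast}$ is a valid solution.

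To close the contradiction I need this swap to be visible to the local-search rule. The total number of elements involved satisfies
\[
|B_i| + |R^{\ast}| \;\le\; |V_i| + |\Gamma(V_i)| \;<\; k,
\]
again by Claim~\ref{claim:size}, so the swap fits the local-search window of size $k$ (padding $B_i$ and $R^{\ast}$ by the same auxiliary elements of $B\setminus B_i$ brings the sizes to exactly $k$ and $k-1$, if one wants the rule applied in the literal form stated in Section~\ref{sect:minimization_problems}). Under our contrapositive assumption $|R^{\ast}| < |B_i|$, this swap strictly decreases $|B|$, contradicting the local optimality of $B$. Hence $|B_i| \le |R_i|+|\Gamma(V_i)|$, completing the proof.

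The only subtle point, and the main place I would double-check, is this last matching of my swap to the exact local-search rule. The existence of the replacement $R^{\ast}$ rests on the locality condition together with the key geometric containment $\Gamma(B_i) \cap R \subseteq R_i \cup \Gamma(V_i)$, which is immediate from $B_i \subseteq V_i$ but is the conceptual heart of why the separator of the $r$-division translates a global feasibility statement into a local budget constraint.
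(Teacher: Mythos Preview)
Your proof is correct and follows essentially the same contradiction-via-local-optimality argument as the paper. The only cosmetic difference is that you swap in $R^\ast = R_i \cup (\Gamma(V_i)\cap R)$ whereas the paper swaps in $R_i \cup \Gamma(V_i)$; since $\Gamma(V_i)\cap B \subseteq B\setminus B_i$ these yield the same resulting set, and your explicit handling of the padding to match the exact $k$/$k{-}1$ rule is a nice touch that the paper leaves implicit.
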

\begin{proof}
Observe first that by arguments similar to those in the proof of Claim~\ref{claim:substition}, the set $(B \setminus B_i) \cup (R_i \cup \Gamma(V_i))$ is also a solution for ${\cal P}$. Moreover, by Claim~\ref{claim:size}, $|R_i|+|\Gamma(V_i)| \leq |V_i|+ |\Gamma (V_i)| < k$. So, if $|B_i|>|R_i|+|\Gamma(V_i)|$, then the local search algorithm would have replaced $B_i$ (or a subset of $B_i$ of size $k$, if $|B_i| > k$) by $R_i \cup \Gamma(V_i)$ before halting. Since it has not done so, we conclude that $|B_i| \le |R_i|+|\Gamma(V_i)|$.
\end{proof}

\begin{theorem}\label{thm:ptas}
For any $\varepsilon$, $0 < \varepsilon < 1$, one can choose a constant $k$, such that
$|B|\le(1+\varepsilon)|R|$.
\end{theorem}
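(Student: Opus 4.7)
The plan is to sum the per-piece inequality from the previous claim, $|B_i| \le |R_i| + |\Gamma(V_i)|$, over all the pieces of the $r$-division, and then convert the resulting bound into a relation between $|B|$ and $|R|$ by controlling the separator contribution. Recall that $B \cap R = \emptyset$, so $n := |V| = |B|+|R|$, and the $r$-division partitions $V$ as $V = S \cup \bigcup_i V_i$ (disjointly), which gives $|B| = |B \cap S| + \sum_i |B_i|$ and $|R| = |R \cap S| + \sum_i |R_i|$.

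Summing over $i$ and bounding $\sum_i |R_i| \le |R|$ yields
\[
|B| - |B \cap S| \;=\; \sum_i |B_i| \;\le\; \sum_i |R_i| + \sum_i |\Gamma(V_i)| \;\le\; |R| + \sum_i |\Gamma(V_i)|.
\]
Now, there are $\Theta(n/r)$ pieces, each with $|\Gamma(V_i)| \le c_3 r^{1-\delta}$, so $\sum_i |\Gamma(V_i)| \le c_1 n / r^\delta$ (up to adjusting the constant $c_1$). Moreover, $|B \cap S| \le |S| \le c_1 n / r^\delta$. Combining gives
\[
|B| \;\le\; |R| + 2 c_1 \, \frac{n}{r^\delta} \;=\; |R| + 2 c_1 \, \frac{|B| + |R|}{r^\delta}.
\]

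Finally, rearrange to isolate $|B|$:
\[
|B|\Bigl(1 - \tfrac{2 c_1}{r^\delta}\Bigr) \;\le\; |R|\Bigl(1 + \tfrac{2 c_1}{r^\delta}\Bigr).
\]
Since $r = k/(c_2 + c_3)$, we can choose $k$ large enough (as a function of $\varepsilon$, $c_1$, $c_2$, $c_3$, and $\delta$) so that $2 c_1 / r^\delta$ is arbitrarily small and the ratio $(1 + 2 c_1 / r^\delta) / (1 - 2 c_1 / r^\delta)$ is at most $1 + \varepsilon$, yielding $|B| \le (1+\varepsilon)|R|$.

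The argument is essentially bookkeeping on top of the previous claims; there is no real obstacle. The only point requiring care is that vertices of $S$ belong to neither any $B_i$ nor any $R_i$ and must be absorbed into the error term on both sides — once via $|B \cap S|$ and once via the multiplicity with which a separator vertex can appear in the $\Gamma(V_i)$'s — and that $n$ on the right-hand side must be expressed as $|B|+|R|$ before one can solve for the ratio $|B|/|R|$.
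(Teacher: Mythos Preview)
Your argument is correct and is essentially identical to the paper's proof: both sum the per-piece inequality to obtain $|B| \le |R| + 2c_1 n / r^\delta$ with $n = |B|+|R|$, rearrange to $|B| \le |R|\cdot\frac{1+2c_1/r^\delta}{1-2c_1/r^\delta}$, and then choose $k$ (hence $r$) large enough to push the ratio below $1+\varepsilon$. The paper additionally spells out an explicit choice $k = (2c_1(c_2+c_3)^\delta / (\varepsilon/5))^{1/\delta}$, but your ``choose $k$ large enough'' suffices for the stated claim.
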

\begin{proof}
Set $c'=2c_1(c_2+c_3)^\delta$, $\varepsilon' = \frac {\varepsilon}{5}$, $k=\frac {c'^{1 / \delta}}{{\varepsilon ' } ^ {1/ \delta} }$, and recall that $r=\frac {k}{c_2+c_3}$. Then,
\old{
\begin{eqnarray}
\lefteqn{|B|\leq |S|+\sum_i |B_i| \leq |S| + \sum_i |R_i| + \sum _i |\Gamma(V_i)|} \nonumber \\
& \leq& \frac {c_1 n}{r^\delta} + |R|+ \frac {c_1 n}{r^\delta} = |R|+ \frac {2c_1 n}{r^\delta} \nonumber \\
&=& |R|+ \frac {|R|+|B|}{k^ \delta} 2c_1(c_2+c_3)^\delta  \nonumber \\
&=& |R|+\varepsilon' (|R|+|B|) \ . \nonumber
\end{eqnarray}
}
\old{
\begin{align*}
|B| &\leq |S|+\sum_i |B_i| \leq |S| + \sum_i |R_i| + \sum _i |\Gamma(V_i)| \leq \frac {c_1 n}{r^\delta} + |R|+ \frac {c_1 n}{r^\delta} \nonumber \\
&= |R|+ \frac {2c_1 n}{r^\delta} = |R|+ \frac {|R|+|B|}{k^ \delta} 2c_1(c_2+c_3)^\delta = |R|+\varepsilon' (|R|+|B|) \ . \nonumber
\end{align*}
}
$|B| \leq |S|+\sum_i |B_i| \leq |S| + \sum_i |R_i| + \sum _i |\Gamma(V_i)| \leq \frac {c_1 n}{r^\delta} + |R|+ \frac {c_1 n}{r^\delta}
= |R|+ \frac {2c_1 n}{r^\delta} = |R|+ \frac {|R|+|B|}{k^ \delta} 2c_1(c_2+c_3)^\delta = |R|+\varepsilon' (|R|+|B|)$.
We thus have,
\old{
\begin{eqnarray}
\lefteqn{|B| \leq |R| \cdot \frac {1+\varepsilon '}{1-\varepsilon '}} \nonumber \\
&=& |R| (1+\varepsilon ')(1+\varepsilon '+{\varepsilon '}^2 +...) \nonumber \\
&\leq& |R| (1+\varepsilon ')(1+\varepsilon ' + \varepsilon ') \nonumber \\
&=& |R| (1+3 \varepsilon ' + 2 {\varepsilon '}^2) \nonumber \\
&\leq& |R| (1+5 \varepsilon ') \nonumber \\
&=& |R| (1+\varepsilon) \ . \nonumber
\end{eqnarray}
}
\old{
\begin{align*}
|B| &\leq |R| \cdot \frac {1+\varepsilon '}{1-\varepsilon '} = |R| (1+\varepsilon ')(1+\varepsilon '+{\varepsilon '}^2 +...) \\
&\leq |R| (1+\varepsilon ')(1+\varepsilon ' + \varepsilon ') = |R| (1+3 \varepsilon ' + 2 {\varepsilon '}^2) \leq |R| (1+5 \varepsilon ') = |R| (1+\varepsilon) \ .
\end{align*}
}
$|B| \leq |R| \cdot \frac {1+\varepsilon '}{1-\varepsilon '} = |R| (1+\varepsilon ')(1+\varepsilon '+{\varepsilon '}^2 + \ldots)
\leq |R| (1+\varepsilon ')(1+\varepsilon ' + \varepsilon ') = |R| (1+3 \varepsilon ' + 2 {\varepsilon '}^2) \leq |R| (1+5 \varepsilon ') = |R| (1+\varepsilon)$.

\end{proof}

\subsection{Maximization problems}

Let us now recall the local search technique as it is used, e.g., in the context of geometric packing. Assume that we are considering a maximization problem ${\cal P}$, that is, one needs to find a maximum-cardinality subset of a given set $X$ that is a solution for ${\cal P}$. Let $X_0 \subseteq X$ be an initial not-necessarily-optimal solution for ${\cal P}$, and let $k$ be a sufficiently large constant. The local search technique checks whether there exists a subset $X' \subseteq X_0$ of size at most $k-1$ and a subset $X'' \subset X$ of size $k$, such that $(X_0 \setminus X') \cup X''$ is still a solution for ${\cal P}$. If yes, then it replaces $X'$ by $X''$ (i.e., it performs this local improvement) and resumes the search. Otherwise, it halts.

Let $B$ and $R$ be two solutions for ${\cal P}$, where $B$ was obtained by applying local search, and assume that $B \cap R = \emptyset$. (Otherwise, we can remove the elements that belong to both $B$ and $R$.)
Moreover, let $G=(V,E)$ be a graph, such that (i) $V = B \cup R$, and (ii) there exists an edge $e \in E$ between a vertex $b \in B$ and a vertex $r \in R$ if and only if $b$ and $r$ intersect. 

Theorem~\mbox{\ref{thm:ptas_max}} below states that if $G$ belongs to some family ${\cal F}$ that has a separator property, then $|B|\ge(1-\varepsilon)|R|$.
The proof is similar to the one in Section~\mbox{\ref{sect:minimization_problems}}; we include it for completeness.
We first construct an $r$-division of $G$, for $r = \frac{k}{c_2+c_3}$, and set $B_i= B \cap V_i$ and $R_i= R \cap V_i$.

\begin{myclaim}
\label{claim:substition_max}
For any index $i$, the set $(B \setminus (B_i\cup \Gamma(V_i))) \cup R_i$ is also a solution for~${\cal P}$.
\end{myclaim}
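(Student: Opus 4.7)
The plan is to verify that the claimed set is feasible by ruling out the only possible source of infeasibility, namely an intersection between a newly inserted element of $R_i$ and a retained element of $B$. The underlying packing problem treats a set as a feasible solution precisely when no two of its objects intersect, so I only need to establish pairwise non-intersection for the three kinds of pairs that can arise in $(B \setminus (B_i \cup \Gamma(V_i))) \cup R_i$.

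First I would handle the easy cases. Two objects inside $R_i$ cannot intersect, because $R_i \subseteq R$ and $R$ is by assumption a solution for ${\cal P}$. Similarly, two objects inside $B \setminus (B_i \cup \Gamma(V_i))$ cannot intersect, because this set is a subset of $B$, which is a solution obtained by local search. So the only remaining case is a pair $(b,r)$ with $b \in B \setminus (B_i \cup \Gamma(V_i))$ and $r \in R_i$.

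The key step is to rule out such an intersection by invoking the definition of $E$ for the maximization setting. Suppose for contradiction that some $b \in B \setminus (B_i \cup \Gamma(V_i))$ intersects some $r \in R_i$. Then by clause (ii) in the definition of $G$, there is an edge $(b,r) \in E$. Now $r \in R_i \subseteq V_i$, and $b \notin V_i$: indeed, $b \in B$ and $B \cap R = \emptyset$ excludes $b$ from $R_i$, while $b \notin B_i$ by assumption. Hence $b$ is a vertex of $V \setminus V_i$ adjacent to a vertex of $V_i$, which means $b \in \Gamma(V_i)$ by definition. This contradicts $b \notin \Gamma(V_i)$, so no such pair exists and the set $(B \setminus (B_i \cup \Gamma(V_i))) \cup R_i$ is indeed a feasible solution for ${\cal P}$.

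The main subtlety, rather than an obstacle, is simply to be explicit about why $b \notin V_i$ (so that the edge $(b,r)$ forces $b$ into $\Gamma(V_i)$), which uses both $b \notin B_i$ and the disjointness $B \cap R = \emptyset$. Everything else is an unpacking of the definition of $\Gamma(V_i)$ together with the fact that edges of $G$ correspond exactly to intersecting pairs across $B$ and $R$.
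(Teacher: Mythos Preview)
Your proof is correct and follows essentially the same approach as the paper's: both argue that any $r \in R_i$ can only intersect elements of $B$ lying in $B_i \cup \Gamma(V_i)$, so removing those and inserting $R_i$ preserves feasibility. Your version is simply more explicit, spelling out the three pair types and the reason why $b \notin V_i$ (using $B \cap R = \emptyset$), whereas the paper compresses this into a single sentence about the neighbors of $r$ in $G$.
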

\begin{proof}
Fix $i$, and let $r \in R_i$. By definition, $r$ intersects only its neighbors in $G$, i.e., only elements in $(B_i \cup \Gamma(V_i)) \cup R_i$. Thus, for each $b \in B \setminus (B_i\cup \Gamma(V_i)))$, $r \cap b =\emptyset$.
Therefore, $(B \setminus (B_i\cup \Gamma(V_i))) \cup R_i$ is also a solution for ${\cal P}$.
\end{proof}

\begin{myclaim}
For any index $i$, $|R_i| \le |B_i|+|\Gamma(V_i)| < k$.
\end{myclaim}
\begin{proof}
By Claim~\ref{claim:substition_max}, the set $(B \setminus (B_i \cup \Gamma(V_i))) \cup R_i$ is also a solution for ${\cal P}$.
Moreover, by Claim~\ref{claim:size}, $|B_i \cup \Gamma(V_i)| = |B_i|+|\Gamma(V_i)| \leq |V_i| + |\Gamma (V_i)| < k$. So, if $|R_i|>|B_i \cup \Gamma(V_i)|$, then the local search algorithm would have replaced $B_i \cup \Gamma(V_i)$ by $R_i$ (or a subset of $R_i$ of size $k$, if $|R_i| > k$) before halting. Since it has not done so, we conclude that $|R_i| \le |B_i|+|\Gamma(V_i)|$.
\end{proof}

\begin{theorem}
\label{thm:ptas_max}
For any $\varepsilon$, $0 < \varepsilon < 1$, one can choose a constant $k$, such that
$|B|\ge(1-\varepsilon)|R|$.
\end{theorem}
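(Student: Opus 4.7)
The plan is to mirror the argument of Theorem~\ref{thm:ptas}, exchanging the roles of $B$ and $R$. All the pieces are already in place: the $r$-division with $r = k/(c_2+c_3)$, the pointwise inequality $|R_i| \le |B_i| + |\Gamma(V_i)|$ from the preceding claim, the separator bound $|S| \le c_1 n / r^\delta$, and the identical bound $\sum_i |\Gamma(V_i)| \le c_1 n / r^\delta$ that comes from summing the per-part bound $|\Gamma(V_i)| \le c_3 r^{1-\delta}$ over the $\Theta(n/r)$ parts. Here $n = |B| + |R|$ since $V = B \cup R$.

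The first step is to partition $R$ according to the $r$-division, writing $R = (R \cap S) \cup \bigcup_i R_i$ so that $|R| \le |S| + \sum_i |R_i|$. Summing the preceding claim across all parts yields $\sum_i |R_i| \le \sum_i |B_i| + \sum_i |\Gamma(V_i)| \le |B| + c_1 n / r^\delta$. Combining with $|S| \le c_1 n / r^\delta$ gives
\[
|R| \;\le\; |B| \;+\; \frac{2 c_1 n}{r^\delta} \;=\; |B| \;+\; \frac{2 c_1 (c_2 + c_3)^\delta}{k^\delta}\,(|B| + |R|) \;=\; |B| \;+\; \varepsilon'(|B| + |R|),
\]
exactly as in the minimization proof, with $\varepsilon' = 2 c_1 (c_2 + c_3)^\delta / k^\delta$.

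Finally, I rearrange to isolate $|B|$: $(1 - \varepsilon')\,|R| \le (1 + \varepsilon')\,|B|$, so $|B| \ge |R| \cdot (1-\varepsilon')/(1+\varepsilon')$. Choosing $k$ large enough to make $\varepsilon' \le \varepsilon/5$ (the same calibration used in Theorem~\ref{thm:ptas}, with constants $c' = 2 c_1 (c_2 + c_3)^\delta$ and $k = (c'/\varepsilon')^{1/\delta}$) and expanding the geometric series gives $|B| \ge (1-\varepsilon)|R|$.

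There is no genuinely new obstacle: the separator-based bookkeeping is symmetric in $B$ and $R$, and the only substantive reversal is that in the maximization setting the locality condition arises from pairwise intersection (so a blue vertex outside $B_i \cup \Gamma(V_i)$ is disjoint from every $r \in R_i$), which has already been handled in Claim~\ref{claim:substition_max}. The argument therefore reduces to the same algebraic manipulation as before, with $B$ and $R$ swapped in the final inequality.
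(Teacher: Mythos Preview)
Your proof is correct and follows essentially the same route as the paper: partition $R$ via the $r$-division, sum the per-part inequality $|R_i|\le |B_i|+|\Gamma(V_i)|$, and combine with the separator bound to get $|R|\le |B|+\varepsilon'(|B|+|R|)$, then rearrange. The only cosmetic difference is the calibration of $k$: the paper sets $k=(c'(2/\varepsilon-1))^{1/\delta}$ so that the coefficient becomes exactly $\varepsilon/(2-\varepsilon)$ and the rearrangement yields $(1-\varepsilon)$ on the nose, whereas you reuse the $\varepsilon'=\varepsilon/5$ choice from Theorem~\ref{thm:ptas}, which is slightly looser but perfectly valid.
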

\begin{proof}
Set $c'=2c_1(c_2+c_3)^\delta$, $k=(c'(\frac{2}{\varepsilon} -1))^{1/ \delta} $, and recall that $r=\frac {k}{c_2+c_3}$. Then,

\begin{align*}
|R| &\leq |S|+\sum_i |R_i| \leq |S| + \sum_i |B_i| + \sum _i |\Gamma(V_i)|
 \leq \frac {c_1 n}{r^\delta} + |B|+ \frac {c_1 n}{r^\delta} = |B|+ \frac {2c_1 n}{r^\delta} \\
&= |B|+ \frac {|R|+|B|}{k^ \delta} 2c_1(c_2+c_3)^\delta
= |B|+\frac{1}{\frac{2}{\varepsilon} -1} (|R|+|B|)
= |B|+\frac{\varepsilon}{2 -\varepsilon} (|R|+|B|) \ .
\end{align*}

Rearranging, we get that $|B|\ge(1-\varepsilon)|R|$.
\end{proof}


\section{Applications}
\label{sect:apps}

In this section we describe several original applications of our generalized local search technique.
The applications in Sections~\ref{sect:l_shallow} and~\ref{sect:triangle} use the maximization version and the applications in Section~\ref{sect:guarding} use the minimization version.

\subsection{Maximum $l$-shallow set for fat objects}
\label{sect:l_shallow}
We consider the \emph{maximum $l$-shallow set} problem for a set of fat objects. Let $D$ be a set of $n$ fat objects and let $l > 0$ be a constant. An \emph{$l$-shallow set} of $D$ is a subset $S$ of $D$, such that the depth of the arrangement of $S$ is at most $l$. In the maximum $l$-shallow set problem, one has to find a maximum-cardinality subset $S$ of $D$, such that $S$ is $l$-shallow. Notice that for $l=1$, $S$ is a maximum independent set of $D$.

\old{
Chan~\cite{Chan03} presented a PTAS for finding a maximum independent set of $D$. Later, Chan and Har-Peled~\cite{ChanH09} presented a local-search-based PTAS for finding a maximum independent set of a set of pseudo-disks (and also of $D$).
Notice that a maximum $l$-shallow set can be larger than the set that is obtained by repeatedly finding a maximum independent set of the set of remaining objects.
}
We show that our generalization of the local search technique enables us to apply local search to find a $(1-\varepsilon)$-approximation of a maximum $l$-shallow set of $D$.
Indeed, let $B$ be the set of fat objects obtained by applying local search and let $R$ be an optimal set. (We may assume that $B \cap R = \emptyset$, since otherwise, we can remove the objects that appear in both sets.) Notice that the analysis of Chan and Har-Peled~\cite{ChanH09} does not immediately apply here (assuming $l > 1$), since the intersection graph of $B \cup R$ is not necessarily planar (even if one removes the ``monochromatic'' edges)
However, this graph does have a separator of size $O(\sqrt{n})$.
This follows from results of Miller et al.~\cite{MillerTTV97} and Smith and Wormald~\cite{SmithW98}, who show that the intersection graph of a set of $l$-shallow fat objects has a separator of size $\sqrt{ln}$.

Recall that the local search algorithm begins with the empty solution, and in each iteration it improves the current solution by replacing a subset of size at most $k-1$ of the current solution with a larger subset of size at most $k$ of $D$, such that the resulting set is still a solution (i.e., it is still $l$-shallow).
Since $k$ is a constant, each iteration can be performed in polynomial time, and therefore the total running time of the local search algorithm is polynomial in $n$.

As for the size of $B$, since both $B$ and $R$ are $l$-shallow, the set $B \cup R$ is $2l$-shallow, and therefore the intersection graph of $B \cup R$ has a separator of size $\sqrt{2ln} = O(\sqrt{n})$.
Hence, by Theorem~\ref{thm:ptas_max}, we conclude that $|B| \ge (1-\varepsilon)|R|$.

Finally, we note that Chan's separator-based method~\cite{Chan03} can be used to obtain a $(1-\varepsilon)$-approximation of a maximum $l$-shallow set of $D$. However, as also mentioned in~\cite{ChanH09}, the disadvantage of his method is that it explicitly applies an algorithm for finding a separator, while we apply it only for analysis purposes.

\subsection{Maximum triangle matching in unit disk graphs}
\label{sect:triangle}
Next, we consider the \emph{maximum triangle matching} problem in unit disk graphs. Given a graph $G=(V,E)$, find a maximum-cardinality collection of pairwise-disjoint subsets $V_1,\ldots,V_m$ of $V$, each consisting of exactly 3 vertices, such that for each $V_i=\{u_i,v_i,w_i\}$, $1 \le i \le m$, the three edges $(u_i,v_i), (v_i,w_i), (w_i,u_i)$ belong to $E$.
Baker~\cite{B94} presented a PTAS for the important case where $G$ is planar.
We give a PTAS for the case where $G$ is an $l$-shallow unit disk graph. This PTAS also holds in more general settings; see remark below.

Let $S$ be a set of $n$ points in the plane, and let $\UDG(S)$ be the graph over $S$, in which there is an edge between $u$ and $v$ if and only if the Euclidean distance between $u$ and $v$ is at most 1.
Let ${\cal D}$ be the set of disks of radius $1/2$ centered at the points of $S$, and assume that the depth of the arrangement of ${\cal D}$ is at most some constant $l$. Then, there is an edge between $u$ and $v$ if and only if $d_u \cap d_v \ne \emptyset$, where $d_u, d_v$ are the disks of ${\cal D}$ centered at $u, v$, respectively.

Our generalization of the local search technique enables us to apply local search to find a $(1-\varepsilon)$-approximation of a maximum triangle matching of $\UDG(S)$.
We begin with the empty set of triangles, and in each iteration we replace a subset of at most $k-1$ triangles of the current solution with a larger subset of at most $k$ triangles, such that the resulting set is still a solution (i.e., it is still a triangle matching).

Consider the set of triangles ${\cal B}$ obtained by applying local search, and the set of triangles ${\cal R}$ of a maximum matching. (Each $\triangle \in {\cal B} \cup {\cal R}$ is a triangle $\triangle = \{a,b,c\}$,
where $a, b, c$ are points in $S$ and $(a,b), (b,c), (c,a)$ are edges of $\UDG(S)$.)
For each triangle $\triangle$ in ${\cal B}$ (resp. in ${\cal R}$), select an arbitrary unique representative point $p_{\triangle}$ that lies in the interior of $\triangle$, and denote by $B$ (resp. $R$) the resulting set of representative points.

Now, consider the graph $G=(B \cup R,E)$,    
where $(p,q) \in E$ if and only if the Euclidean distance between $p$ and $q$ is at most 2.
Notice that $G$ satisfies the following locality condition. For any two points $b \in B$ and $r \in R$, whose corresponding triangles are $\triangle u_bv_bw_b$ and $\triangle u_rv_rw_r$, respectively, and such that $\{u_b, v_b, w_b\} \cap \{u_r,  v_r,  w_r\} \neq \emptyset$, the edge $(b,r)$ belongs to $E$.
This is true since the distance from $b$ (alt., $r$) to any of the its triangle corners is at most 1.


It remains to show that $G$ has a separator.
Let ${\cal D'}$ be the set of unit disks centered at the points of $B \cup R$. Then, $(p,q) \in E$ if and only if the disks of ${\cal D'}$ centered at $p$ and $q$, respectively, intersect. It is easy to see that the depth of the arrangement of ${\cal D'}$ is at most some constant $c = c(l)$. This follows from the assumption that the depth of the arrangement of ${\cal D}$ is at most $l$.
Therefore, by Miller et al.~\cite{MillerTTV97}, $G$ has a separator of size $O(\sqrt{cn}) = O(\sqrt{n})$.
Hence, by Theorem~\ref{thm:ptas_max}, we conclude that $|B| \ge (1-\varepsilon)|R|$.

\old{
\begin{myclaim}
The depth of the arrangement of ${\cal D'}$ is at most some constant $c = c(l)$.
\end{myclaim}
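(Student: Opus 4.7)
The plan is to bound, for an arbitrary fixed point $x \in \reals^2$, the number of disks of ${\cal D'}$ containing $x$, and to show this number is at most some function of $l$ only. The starting geometric observation is that every triangle $\triangle = \{a,b,c\}$ appearing in ${\cal B} \cup {\cal R}$ has all three edges in $\UDG(S)$, so each edge has length at most $1$, and hence the diameter of $\triangle$ is at most $1$. Since the representative point $p_\triangle$ was chosen inside the (convex) triangle, $|p_\triangle - v| \le 1$ for every vertex $v \in \{a,b,c\}$.

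Next I would fix $x$ and consider any triangle $\triangle$ whose representative $p_\triangle$ satisfies $|x - p_\triangle| \le 1$, i.e., contributing to the depth at $x$. By the triangle inequality, each of the three vertices of $\triangle$ lies within distance $2$ of $x$. Let $T_x = \{s \in S : |s - x| \le 2\}$. Then every triangle contributing to the depth at $x$ has all three of its vertices in $T_x$.

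The crucial step, which is the main obstacle, is to translate the $l$-shallowness of ${\cal D}$ (a statement about point-coverage by radius-$1/2$ disks) into a bound on $|T_x|$. The equivalent reformulation is that every closed disk of radius $1/2$ contains at most $l$ points of $S$. Since a closed disk of radius $2$ can be covered by some absolute constant number $K$ of disks of radius $1/2$ (a standard covering argument), we obtain $|T_x| \le Kl$. This is where the geometric constants are fixed, and it must be done carefully to ensure $K$ is genuinely $l$-independent.

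Finally, I would use the fact that the triangles in ${\cal B}$ are pairwise vertex-disjoint, and so are the triangles in ${\cal R}$ (both being matchings). Hence the number of triangles of ${\cal B}$ with all vertices in $T_x$ is at most $|T_x|/3 \le Kl/3$, and similarly for ${\cal R}$. Summing, the depth of ${\cal D'}$ at $x$ is at most $2Kl/3$, a constant $c = c(l)$ depending only on $l$, as required.
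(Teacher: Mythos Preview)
Your proof is correct and is precisely the intended elaboration. The paper's own proof is the single sentence ``Follows from the assumption that the depth of the arrangement of ${\cal D}$ is at most $l$,'' so your argument --- bounding $|T_x|$ by covering the radius-$2$ disk about $x$ with a constant number of radius-$1/2$ disks, and then exploiting vertex-disjointness of each matching --- is exactly the kind of detail the paper suppresses.
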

\begin{proof}
Follows from the assumption that the depth of the arrangement of ${\cal D}$ is at most $l$.
\old{
Every point in the plane is covered by at most $l$ disks of ${\cal D}$.
For each $s \in S$, the number of points in $S \setminus \{s\}$ at distance $1$ or less from $s$ (i.e., the degree of $s$ in $\UDG(S)$), is at most $l-1$.
Therefore, each point in the plane is of distance at most $2$ from $k(k-1)$ points in $S$. In particular, each point in the plane is of distance at most $2$ from $k(k-1)$ vertices in $V$.
Thus, each point in the plane is covered by at most $k(k-1)$ disks in ${\cal D}_2$.
}
\end{proof}
}
\old{
\begin{corollary}
$G$ has a separator of size $\sqrt{cn} = O(\sqrt{n})$.
\end{corollary}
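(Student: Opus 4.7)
The plan is to apply the intersection-graph separator theorem of Miller et al., which gives a separator of size $O(\sqrt{cn})$ for the intersection graph of $n$ unit disks whose arrangement has depth at most $c$. Since $G$ is by construction the intersection graph of the unit disks in ${\cal D'}$, the entire task reduces to showing that the depth of the arrangement of ${\cal D'}$ is bounded by a constant $c = c(l)$.

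To bound the depth at an arbitrary point $x \in \reals^2$, I would first observe that each representative $p \in B \cup R$ lies in the interior of a triangle of ${\cal B} \cup {\cal R}$ whose edges all have length at most $1$ (since they are $\UDG(S)$-edges). Thus every such triangle has diameter at most $1$, and in particular each vertex of the triangle is within Euclidean distance $1$ of its representative. Consequently, if $x$ lies in the unit disk of ${\cal D'}$ centered at $p$, then all three vertices of the corresponding triangle lie within distance $2$ of $x$.

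I would next bound the number of points of $S$ within distance $2$ of $x$. The key step is to cover the disk of radius $2$ around $x$ by a constant number $N$ of disks of radius $1/2$; this is possible with $N$ an absolute constant independent of the input. The $l$-shallowness of ${\cal D}$ says that any point in the plane is covered by at most $l$ disks of ${\cal D}$, which, since the disks of ${\cal D}$ have radius $1/2$ and are centered at points of $S$, is equivalent to the statement that at most $l$ points of $S$ lie within distance $1/2$ of any given point. Hence each of the $N$ covering disks contains at most $l$ points of $S$, so at most $Nl = O(l)$ points of $S$ lie within distance $2$ of $x$.

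Combining the two steps, every triangle in ${\cal B} \cup {\cal R}$ whose representative lies within distance $1$ of $x$ must have all three of its vertices in this set of $O(l)$ points of $S$; thus the number of such triangles, and hence the depth of ${\cal D'}$ at $x$, is at most $\binom{O(l)}{3} = O(l^3)$. Setting $c = c(l) = O(l^3)$ and invoking Miller et al.\ on the intersection graph of ${\cal D'}$ then yields the separator of size $O(\sqrt{cn}) = O(\sqrt{n})$. I do not expect a serious obstacle; the only real subtlety is the covering argument, which translates $l$-shallowness of ${\cal D}$ (valid at the scale of radius $1/2$) into a point-density bound at the larger scale of radius $2$ needed for ${\cal D'}$.
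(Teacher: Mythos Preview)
Your proposal is correct and follows essentially the same approach as the paper: reduce the statement to the claim that the arrangement of ${\cal D'}$ has constant depth $c=c(l)$, and then invoke the separator theorem of Miller et al. The paper states the depth bound without proof (``This follows from the assumption that the depth of the arrangement of ${\cal D}$ is at most $l$''), whereas you supply an explicit covering argument; your bound $c=O(l^3)$ is valid, though in fact one can sharpen it to $O(l)$ by using that the triangles within ${\cal B}$ (resp.\ within ${\cal R}$) are vertex-disjoint.
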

}
\old{
The claim implies that $G$ has a separator of size $O(\sqrt{cn}) = O(\sqrt{n})$.
Hence, by Theorem~\ref{thm:ptas_max}, we conclude that $|B| \ge (1-\varepsilon)|R|$.
}
\noindent
{\bf Remark}.
It is easy to see that one can replace ``triangle'' in the above result with any connected graph (i.e., structure) $H$ with a constant number of vertices, and obtain a PTAS for maximum $H$-matching in unit disk graphs.

\old{
\rom{The following section is not well written -- I spent a lot of time trying to formulate it.... But still I don't feel that it is good enough.}

By a similar construction one can also obtain a PTAS for $G'$ matching of $\UDG(S)$, where $G'$ is a graph of constant diameter $d$, and constant number of nodes.
E.g., partition the graph into maximum number of cycles of length $7$ (i.e., $C_7$, $d=6$). In each iteration of the algorithm we try to improve the solution by removing at most $k-1$ nodes and adding $k+6$ nodes of $S$. The number of subsets considered remains polynomial and the verification at each step also remains polynomial. Thus the total running time remains polynomial.
Now, each $x \in \cal{B} \cup \cal{R}$ is isomorphic to $C_7$. Each vertex in $B \cup R$ correspond to a different cycle, and the vertex is located inside its convex-hull.
Thus, for each $b \in B$ (alt., $r \in R$), the Euclidean distance to any point on the cycle is at most $d$. Therefore, we consider the graph $G=(R \cup B,E)$, in which $(u,v) \in E$ if and only if the Euclidean distance between $u$ and $v$ is at most $2d$.
This graph has the property that for every two cycles of length $7$ in $\cal{B} \cup \cal{R}$ that intersect, there is an edge in $G$.  Since $k$ and $d$ are constants, then
the depth of the arrangement of disks centered at $R \cup B$ and with distance $2d$ is at most $k^{2d}$. Thus, there exists an $O(\sqrt{n})$ separator for $G$, and by Theorem~\ref{thm:ptas_max}, the local search algorithm yields a PTAS for this matching problem.
}

\subsection{Guarding with limited visibility}
\label{sect:guarding}
We now demonstrate our generalization of the local search technique on a family of \emph{covering} problems, or, more precisely, on a family of \emph{guarding} problems. Throughout this section, $\cal{G}$ denotes a set of points, representing stationary guards, where each guard $g \in \cal{G}$ has its own range of sight $r_g$.
Let $D_g$ denote the disk of radius $r_g$ centered at $g$, and set $D = \{ D_g : g \in \cal{G} \}$.
We assume that the depth of the arrangement of $D$ is at most some constant $l$.

\paragraph{Guarding a polygon.}
Given a polygon $P$ (possibly with holes) and a set ${\cal G} \subseteq P$,
a \emph{minimum guarding set} for $P$ (with respect to $\cal{G}$) is a minimum-cardinality subset $\cal{G}'$ of $\cal{G}$, such that every $p \in P$ is guarded by $\cal{G}'$, i.e., for every $p \in P$, there exists $g \in \cal{G}'$, such that $|gp| \le r_g$ and the segment $\overline{gp}$ is contained in $P$.

Eidenbenz et al.~\cite{ESW01} proved that finding a minimum guarding set for a polygon, where the given set of guards is the polygon's set of vertices, is APX-hard, even if the polygon has no holes and there are no limits on the ranges.
We show that our generalization of the local search technique enables us to apply local search to find a $(1+\varepsilon)$-approximation of a minimum guarding set for $P$ (with respect to $\cal{G}$), under the above assumption concerning the depth of the arrangement of $D$.

For a guard $g \in \cal{G}$, let $\VP{g}$ denote its visibility polygon and let $\VR{g}$ denote its visible region, where $\VR{g} = \VP{g} \cap D_g$. Notice that $\VR{g}$ is not necessarily convex.
For a subset $X \subseteq \cal{G}$, let $D_X$ denote the set $\{D_g : g \in X\}$ and
let $\VR{X}$ denote the set $\{\VR{g} : g \in X\}$.
Let $B \subseteq \cal{G}$ be the set of guards obtained by applying local search (to the set $\VR{\cal{G}}$) and let $R$ be an optimal set. (We may assume that $B \cap R = \emptyset$, since otherwise, we can remove the guards that appear in both sets.) Consider the
graph $G=(B \cup R, E)$, where $(g,g') \in E$ if and only if $\VR{g} \cap \VR{g'} \ne \emptyset$.
We claim that $G$ has a separator of size $\sqrt{ln} = O(\sqrt{n})$, since any separator of the intersection graph of $D_B \cup D_R$ is also a separator of $G$, and the former graph has a separator of size $\sqrt{ln} = O(\sqrt{n})$, by a result of Miller et al.~\cite{MillerTTV97}.
Observe that the locality condition is satisfied, since for any $p \in P$, let $b \in B$ and $r \in R$ be two guards that guard $p$, then $\VR{b} \cap \VR{r} \ne \emptyset$ and hence the edge $(b,r)$ is in $G$. Therefore, by Theorem~\ref{thm:ptas}, we conclude that $|B| \le (1+\varepsilon)|R|$.

\begin{theorem}
There exists a PTAS for minimum guarding a (not necessarily simple) polygon with respect to a given set of guards, assuming that the depth of the corresponding arrangement of disks is bounded by a constant.
\end{theorem}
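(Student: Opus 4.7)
The plan is to instantiate the minimization framework of Section~\ref{sect:minimization_problems} with the ``objects to be solved'' being the points of $P$ and the candidate set being ${\cal G}$ (viewed through its visible regions $\VR{{\cal G}}$). First I would run the local search algorithm with the constant $k$ prescribed by Theorem~\ref{thm:ptas}; checking whether a given subset $X \subseteq {\cal G}$ still guards $P$ amounts to verifying that $\bigcup_{g \in X}\VR{g} = P$, which can be decided in polynomial time by overlaying arrangements. Since each local-search iteration examines only $O(n^{O(k)})$ candidate swaps and $k$ depends only on $\varepsilon$, the total running time is polynomial in $n$, starting, for instance, from the initial feasible solution $X_0 = {\cal G}$.

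Let $B$ be the output of the local search and let $R$ be an optimum guarding set, assumed disjoint from $B$ after removing common elements. I would then form the graph $G = (B \cup R, E)$ with $(g,g') \in E$ iff $\VR{g} \cap \VR{g'} \neq \emptyset$. The locality condition of Section~\ref{sect:minimization_problems} is immediate: for every $p \in P$ there exist $b \in B$ and $r \in R$ with $p \in \VR{b} \cap \VR{r}$, so the edge $(b,r)$ witnesses that $p$ is jointly ``solved'' by $B$ and $R$.

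The main obstacle is exhibiting a separator for $G$ of the right form, since the visible regions $\VR{g}$ are in general neither convex nor fat, and so the standard intersection-graph separator results do not apply to them directly. I would sidestep this by noting that $\VR{g} \subseteq D_g$, hence $\VR{g} \cap \VR{g'} \neq \emptyset$ forces $D_g \cap D_{g'} \neq \emptyset$. Consequently $E$ is contained in the edge set of the intersection graph of $D_B \cup D_R$, and any vertex separator of the latter is automatically a separator of $G$. Both $D_B$ and $D_R$ sit inside the $l$-shallow family $D$, so $D_B \cup D_R$ is $2l$-shallow and, by Miller et al.~\cite{MillerTTV97}, has a separator of size $O(\sqrt{ln}) = O(\sqrt{n})$. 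In the notation of Section~\ref{sec:general} this is the separator property with $\delta = 1/2$, and a direct appeal to Theorem~\ref{thm:ptas} then yields $|B| \le (1+\varepsilon)|R|$, completing the PTAS.
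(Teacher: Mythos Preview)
Your argument is correct and follows essentially the same route as the paper: define $G$ via intersections of visible regions, verify the locality condition pointwise, and obtain the separator by observing that $G$ is a subgraph of the disk intersection graph of $D_B \cup D_R$, to which Miller et al.\ applies. One cosmetic point: since $B,R \subseteq {\cal G}$ (after discarding common elements), $D_B \cup D_R$ is already a subset of the $l$-shallow family $D$ and hence itself $l$-shallow; your ``$2l$-shallow'' bound is valid but looser than necessary.
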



It is possible to consider other versions of the problem in which the visible region of a guard $g \in \cal{G}$ is defined differently, as long as $\VR{g} \subseteq D_g$. The approximation analysis for these versions is the same as for the standard version.
For example, consider the problem of guarding a polygon $P$, where each guard can see through at most some fixed number of walls (edges of the polygon $P$). Then, the visibility polygon of a guard $g \in \cal{G}$ is different than its visibility polygon in the standard version, but still its visible region is contained in $D_g$. For such problems, one needs to modify the verification procedure of the local search algorithm (i.e., whether a given set of guards consists of a solution). We thus obtain, for example, the following corollary.

\begin{corollary}
There exists a PTAS for minimum guarding a polygon through walls, assuming that the depth of the corresponding arrangement of disks is bounded by a constant.
\end{corollary}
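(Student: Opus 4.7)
The plan is to reduce the through-walls version to the framework already established for the standard polygon-guarding theorem, by showing that the only ingredient that really changes is the algorithmic verification step, while the analytic machinery (locality condition plus separator) carries over verbatim.

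First, I would redefine the visible region of a guard $g\in\mathcal{G}$ for the through-walls model: $\VP{g}$ is now the set of points of $P$ visible from $g$ along segments that cross at most some fixed constant number $w$ of edges of $P$, and $\VR{g}=\VP{g}\cap D_g$. The crucial observation is that regardless of how many walls the sight line is allowed to cross, we still have $\VR{g}\subseteq D_g$. This is the only property of $\VR{g}$ that was used in the separator argument for the standard version: the intersection graph of $\{\VR{g}:g\in B\cup R\}$ is a subgraph of the intersection graph of $D_B\cup D_R$, hence inherits the $O(\sqrt{n})$ separator guaranteed by Miller et al.\ under the bounded-depth assumption on $D$.

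Next I would verify the locality condition, which is identical to the proof in the previous section: if $p\in P$ is guarded in both the local-search solution $B$ and the optimal solution $R$, then there exist $b\in B$ and $r\in R$ with $p\in\VR{b}\cap\VR{r}$, so $(b,r)$ is an edge of the graph $G=(B\cup R,E)$ defined by intersection of visible regions. Together with the separator bound above, Theorem~\ref{thm:ptas} yields $|B|\le(1+\varepsilon)|R|$.

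What remains is the algorithmic side: to run local search we must be able to test in polynomial time whether a candidate subset $\mathcal{G}'\subseteq\mathcal{G}$ covers $P$. For the standard visibility model this is a routine arrangement computation; for the through-walls model one must instead, for each $g$, construct the portion of $P$ reachable from $g$ by a segment crossing at most $w$ edges. Since $w$ is a fixed constant, this can still be done in polynomial time (e.g.\ by a modified visibility-polygon computation that keeps a bounded ``wall budget''), and the overall local-search algorithm remains polynomial in $n$ for each fixed $k$. I expect this verification step to be the only real obstacle, as the analytic part is essentially inherited; but because $w$ is constant it is a routine geometric construction rather than a genuine difficulty. The corollary then follows by combining this modified verification routine with the approximation guarantee from Theorem~\ref{thm:ptas}.
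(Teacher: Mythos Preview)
Your proposal is correct and follows essentially the same approach as the paper: the paper likewise observes that the only property of $\VR{g}$ used in the analysis is $\VR{g}\subseteq D_g$, so the separator and locality arguments carry over unchanged, and the sole modification required is to the verification procedure of the local search. Your write-up is in fact more detailed than the paper's, which dispatches the corollary in a single paragraph without spelling out the locality check or the polynomial-time verification.
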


\old{
\paragraph{Guarding a polygon through walls.}
The following problem is motivated by the problem of placing wireless transceivers inside of a building, where a client can receive signals even from transceivers that are located in other rooms as long as the number of walls between them is small.
That is, we consider the problem of guarding a polygon $P$ such that the guards can see through at most some number $m$ of walls.
More formally, given a polygon $P$ and a set of points $\cal{G}$ in $P$, where each guard $g \in \cal{G}$  has a limited guarding range $r_g$. Find a minimum cardinality subset $\cal{G}' \subseteq \cal{G}$ such that by placing guards in $\cal{G}'$, we have that for every $p \in P$ there exists $g \in \cal{G}'$ such that segment $|pg| \leq r_g$ and $|pg|$ intersect with at most $m$ edges of $P$.

Again, if we assume that $d$ is a constant, then we are able to apply the local search algorithm.
In each iteration, the algorithm checks if a solution is valid by checking that the guards of the solution guard $P$ (under the constraint that a guard can see through at most $m$ walls).
Let $\VR{g}$ denote the visible region of guard $g \in \cal{G}$. For a subset $X \subseteq \cal{G}$, let $\VR{X}$ denote the set $\{\VR{g} : g \in X\}$.
Consider the intersection graph of $\VR{B} \cup \VR{R}$, which has a separator of $O(\sqrt{n})$ by similar claims as above.
Thus, the local search is a $(1+\varepsilon)$-approximation.


\begin{corollary}
There exists a PTAS for the minimum guarding set of a polygon through a fixed number of walls, assuming that the ranges are limited and the depth of the arrangement is at most $O(1)$.
\end{corollary}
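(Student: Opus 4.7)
The proof will closely parallel the argument already given for the standard polygon-guarding theorem, with only the verification step and the definition of visible region requiring adjustment. The plan is to redefine, for each guard $g \in \cal{G}$, its visibility polygon $\VP{g}$ to be the set of points $p \in P$ such that the segment $\overline{gp}$ crosses at most $m$ edges of $P$ (where $m$ is the fixed ``wall budget''), and then set $\VR{g} = \VP{g} \cap D_g$ exactly as before. The crucial structural property preserved under this modification is that $\VR{g} \subseteq D_g$; this is all the geometry that the separator argument depends on.

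First I would specify the local search algorithm: start with any feasible solution (e.g.\ $\cal{G}$ itself) and, in each step, look for a swap of up to $k$ guards against up to $k-1$ guards that keeps every point of $P$ guarded under the through-walls visibility rule. The only step that differs from the standard polygon-guarding setting is the verification procedure, which must now check coverage under the modified visibility. This can be done in polynomial time by computing the arrangement induced on $P$ by the candidate guards' modified visibility polygons (each of which has polynomial complexity, since seeing through a bounded number of walls can still be computed by a ray-shooting / line-sweep argument with polynomial output size) and checking that each face is covered.

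Next I would set up the analysis graph exactly as in the preceding theorem. Let $B$ be the solution returned by local search and $R$ an optimal solution, with $B \cap R = \emptyset$. Define $G = (B \cup R, E)$ with $(g,g') \in E$ iff $\VR{g} \cap \VR{g'} \neq \emptyset$. The locality condition is immediate: any $p \in P$ guarded both by some $b \in B$ and some $r \in R$ lies in $\VR{b} \cap \VR{r}$, forcing the edge $(b,r) \in E$. Since $\VR{g} \subseteq D_g$, the graph $G$ is a subgraph of the intersection graph of $D_B \cup D_R$; by the bounded-depth assumption on the arrangement of $D$ together with Miller et al.~\cite{MillerTTV97}, the latter has a separator of size $O(\sqrt{n})$, and hence so does $G$ (since $\cal{F}$ is monotone). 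Applying Theorem~\ref{thm:ptas} now gives $|B| \leq (1+\varepsilon)|R|$.

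The main obstacle, as in the standard case, is not the separator argument (which transfers verbatim once we observe $\VR{g} \subseteq D_g$), but the verification subroutine inside local search: one must argue that deciding whether a candidate subset of guards covers $P$ under the through-walls rule is polynomial. Once this is established the rest of the proof is mechanical and the corollary follows.
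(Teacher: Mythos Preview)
Your proposal is correct and follows essentially the same approach as the paper: the paper's justification for this corollary is just the paragraph preceding it, which observes that the analysis of the standard polygon-guarding theorem carries over verbatim to any variant in which $\VR{g} \subseteq D_g$, and that only the verification procedure inside local search needs to be adapted. You have spelled out exactly these two points---the containment $\VR{g} \subseteq D_g$ making the separator argument go through unchanged, and the need for a polynomial-time coverage check under the through-walls rule---with somewhat more detail than the paper provides.
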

}

\paragraph{Terrain guarding with limited visibility.}
Let $T$ be a 1.5D terrain (i.e., an $x$-monotone polygonal chain), let ${\cal G} \subseteq T$ be a finite set of guards, and let $X \subseteq T$ be a finite set of points to be guarded. A guard $g \in \cal{G}$ sees a point $x \in X$ if $\overline{gx}$ does not cross any edge of $T$ and $|gx| \le r_g$. The goal is to find a minimum-cardinality subset $\cal{G}' \subseteq \cal{G}$, such that $\cal{G}'$ sees $X$ (i.e., for each point $x \in X$, there exists a guard in $\cal{G}'$ that sees $x$).
Assuming unlimited visibility (i.e., $r_g= \infty$, for each $g \in \cal{G}$), the problem is known to be NP-hard~\cite{CEU95,KK10}, and there exists a local-search-based PTAS for it~\cite{GKKV09}.  
This PTAS relies heavily on the, so called, ``order claim'', which states that for any four points $a, b, c, d$ on $T$ (listed from left to right), if $a$ sees $c$ and $b$ sees $d$ then $a$ also sees $d$. Unfortunately, this claim is false in the limited visibility version.
However, this version is a simple variant of polygon guarding with limited visibility.
Therefore, we have
\begin{theorem}
There exists a PTAS for terrain guarding, assuming that the depth of the corresponding arrangement of disks is bounded by a constant.
\end{theorem}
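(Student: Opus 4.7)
The plan is to mirror the polygon-guarding proof almost verbatim, replacing only the definition of visibility with the terrain-visibility rule. For each guard $g \in \cal{G}$, let $\VR{g}$ denote the set of points $x \in T$ such that $\overline{gx}$ avoids the interior of (the region below) $T$ and $|gx| \le r_g$; by construction $\VR{g} \subseteq D_g$, which is the only structural property the analysis will use. I would then run the minimization local-search algorithm of Section~\ref{sect:minimization_problems} with ground set $\cal{G}$, where validity of a candidate subset is checked by testing that every $x \in X$ lies in $\bigcup_{g \in \cal{G}'} \VR{g}$. Each iteration is polynomial since $k$ is constant, and the total number of iterations is polynomial by the usual monotonicity of the solution size.

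Next, let $B$ denote the local-search output and $R$ a minimum guarding set, and assume $B \cap R = \emptyset$ (removing common elements otherwise). Define the graph $G=(B \cup R, E)$ with $(g,g') \in E$ iff $\VR{g} \cap \VR{g'} \ne \emptyset$. The locality condition is immediate: for any $x \in X$, if $b \in B$ and $r \in R$ both see $x$, then $x \in \VR{b} \cap \VR{r}$, so $(b,r) \in E$.

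For the separator property, I would observe that $G$ is a subgraph of the intersection graph of the disk collection $\{D_g : g \in B \cup R\}$, because $\VR{g} \cap \VR{g'} \ne \emptyset$ forces $D_g \cap D_{g'} \ne \emptyset$. Since the full arrangement $D$ has depth at most $l$, so does this sub-collection, and by Miller et al.~\cite{MillerTTV97} the disk intersection graph has a separator of size $\sqrt{ln} = O(\sqrt{n})$. A separator of a supergraph is a separator of the subgraph, so $G$ has the required $O(\sqrt{n})$-separator. Applying Theorem~\ref{thm:ptas} then yields $|B| \le (1+\varepsilon)|R|$.

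The point worth stressing, and really the only subtlety, is that the classical $1.5$D terrain-guarding PTAS of Gibson et al.~\cite{GKKV09} relies crucially on the order claim, which fails under limited visibility; here, by contrast, our argument never uses any left-to-right ordering on $T$. It uses only the two facts $\VR{g} \subseteq D_g$ and the bounded-depth separator theorem for disks, both of which remain valid in the limited-visibility setting. Consequently no genuine obstacle arises — the theorem follows by essentially the same reasoning as the polygon-guarding case, with visibility along the terrain replacing visibility inside the polygon.
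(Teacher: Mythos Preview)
Your proposal is correct and takes essentially the same approach as the paper, which simply observes that terrain guarding with limited visibility is a variant of polygon guarding with limited visibility (since $\VR{g} \subseteq D_g$ is the only structural property needed) and invokes the same analysis. You have in fact spelled out the argument in more detail than the paper does, including the point about the order claim of~\cite{GKKV09} failing here, which the paper also notes.
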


Consider the guarding problems above, but now set $r_g=1$, for each $g \in \cal{G}$, and assume the requirement is to guard a finite set $X$ of points in $P$ (alt., on $T$). Then,
one may replace the assumption that the depth of the arrangement of disks is bounded by a constant with the assumption that the points in $X$ are not too crowded, i.e., the distance between any two points in $X$ is at least some constant $\delta > 0$. This assumption is sometimes more convenient. Under this assumption, each guard sees at most some constant number $c=c(\delta)$ of points of $X$, and it is likely therefore that there are many redundant guards. (A guard is \emph{redundant} if there exists another guard that sees the same subset of points of $X$.) We thus remove redundant guards from $\cal{G}$, one at a time, as long as there are such guards in $\cal{G}$. It is easy to see that the depth of the arrangement of disks corresponding to the set of remaining guards is bounded by a constant. We obtain, e.g., the following theorem.

\begin{theorem}
There exists a PTAS for minimum guarding a sparse set of points $X$ within a (not necessarily simple) polygon $P$ with respect to a given set of guards of unit visibility range.
\end{theorem}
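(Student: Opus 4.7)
The plan is to reduce this statement to the polygon-guarding PTAS already established by preprocessing $\cal{G}$ so that the arrangement of unit disks $\{D_g : g \in \cal{G}\}$ has bounded depth, and then invoking the earlier theorem verbatim. First I would declare a guard $g$ \emph{redundant} if there exists another guard $g' \neq g$ with $\{x \in X : g \text{ sees } x\} = \{x \in X : g' \text{ sees } x\}$, and iteratively discard redundant guards one at a time until none remain. This preprocessing preserves the optimum value: any occurrence of a discarded guard in an optimal solution can be replaced by the representative guard that sees exactly the same subset of $X$, producing a feasible solution of the same size over the pruned instance.

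The key step is to bound the depth of the arrangement of the surviving unit disks. Fix an arbitrary point $p \in \reals^2$ and consider any surviving guard $g$ with $p \in D_g$; then $|gp| \le 1$, so every $x \in X$ seen by $g$ satisfies $|px| \le |pg|+|gx| \le 2$. Since $X$ is $\delta$-separated, a standard packing argument bounds the number of points of $X$ within distance $2$ of $p$ by some constant $c' = O(1/\delta^2)$. Consequently the subset of $X$ seen by any surviving guard whose disk covers $p$ is a subset of a fixed set of size at most $c'$, giving at most $2^{c'}$ possible such subsets. Because non-redundancy forces all surviving guards to have distinct visible subsets of $X$, at most $2^{c'}$ of them can cover $p$, so the depth of the arrangement is bounded by the constant $l := 2^{c'}$.

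Finally, I would run the local-search algorithm from Section~\ref{sect:guarding} on the preprocessed instance, taking the visible region of a guard $g$ to be $\VR{g}$ restricted to covering $X$; note $\VR{g} \subseteq D_g$ still holds. The locality condition carries over unchanged: if $b \in B$ and $r \in R$ both see the same $x \in X$, then $x \in \VR{b} \cap \VR{r}$, so the edge $(b,r)$ lies in the analysis graph $G$. The intersection graph of $D_B \cup D_R$ has depth at most $l$, hence by Miller et al.~\cite{MillerTTV97} admits a separator of size $O(\sqrt{ln}) = O(\sqrt{n})$, which is also a separator for $G$. Theorem~\ref{thm:ptas} then yields $|B| \le (1+\varepsilon)|R|$. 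The main obstacle in this plan is the depth bound: it is precisely the redundancy-removal step coupled with the pigeonhole argument above that lets us reuse the earlier machinery without any further modification.
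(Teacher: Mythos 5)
Your proposal is correct and follows essentially the same route as the paper: remove redundant guards (those whose visible subsets of $X$ coincide with another guard's), argue that the surviving disks have constant depth, and then invoke the bounded-depth polygon-guarding PTAS. Your explicit pigeonhole argument (all guards covering a point $p$ see subsets of the $O(1/\delta^2)$ points of $X$ within distance $2$ of $p$, hence at most $2^{O(1/\delta^2)}$ survive) simply fills in the step the paper dismisses as ``easy to see.''
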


\section{Discrete coverage of points}\label{sec:coverage}

We consider the following fundamental problem.
Let $P$ be a set of points in the plane and let $D$ be a set of disks that covers $P$. (We assume that the centers of the disks in $D$ are in general position.) Find a minimum-cardinality subset $D' \subseteq D$ that covers $P$.

If $D$ is a set of unit disks, then this is the dual version of the problem known as discrete piercing of unit disks. Mustafa and Ray~\cite{MR10} presented a PTAS for the latter problem (even for the case of varying radii).
We present a local-search-based PTAS for discrete coverage of points. Our proof is based on the framework developed in~\cite{MR10} and in Section \ref{sec:general}, and is inspired by the work of Gibson and Pirwani~\cite{GP10}.

\begin{theorem}
\label{thm:diskcoverage}
There exists a PTAS for discrete coverage of points by disks (alt., by axis-parallel squares).
\end{theorem}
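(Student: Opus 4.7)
The plan is to instantiate the minimization framework of Section~\ref{sect:minimization_problems}. We run local search on coverings of $P$ by subsets of $D$: starting from the trivial solution $D$, repeatedly check whether some swap of up to $k$ disks of the current cover with up to $k-1$ disks of $D$ preserves coverage; if so, apply it, otherwise halt. Since $k$ is a constant, each iteration, including the coverage check, runs in polynomial time. Let $B$ be the local search output and $R$ an optimal cover; as usual we may assume $B \cap R = \emptyset$. By Theorem~\ref{thm:ptas}, to conclude $|B|\le(1+\varepsilon)|R|$ it suffices to exhibit a graph $G=(B \cup R, E)$ which (i) satisfies the locality condition — every $p \in P$ is covered by some $b \in B$ and some $r \in R$ with $(b,r) \in E$ — and (ii) belongs to a family with a separator property. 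Our target is to make $G$ planar, so that the classical $O(\sqrt{n})$ planar separator gives the bound.

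The core of the argument is therefore the construction of a planar exchange graph, and this is where the varying-radii case departs sharply from Mustafa and Ray's unit-disk proof. Following Gibson and Pirwani~\cite{GP10}, for each point $p \in P$ we designate a canonical pair: let $b_p$ be the smallest disk of $B$ containing $p$ and $r_p$ the smallest disk of $R$ containing $p$ (using general position of centers to break ties). Add the edge $(b_p, r_p)$ to $G$ and realize it geometrically by the polyline from the center of $b_p$ to $p$ to the center of $r_p$. The locality condition is immediate, since by construction both endpoints cover $p$.

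The main obstacle, and the technically deepest step, is showing that this drawing can be made planar, i.e., that after a possible reordering near shared points the edges are pairwise non-crossing. The proof goes by contradiction, assuming two edges $(b_p, r_p)$ and $(b_q, r_q)$ properly cross. Using the fact that two circles intersect in at most two points and the lune structure this induces, a short case analysis on which of the four centers lies inside which of the four disks yields a configuration in which the minimality of $b_p, r_p, b_q$ or $r_q$ is violated — for instance, one shows that $b_p$ must already cover $q$ and be smaller than $b_q$. This is precisely the style of planarity argument used by Gibson and Pirwani for dominating sets in disk graphs, and it transfers to the covering setting because only the ``smallest covering disk'' property is used, not any specific adjacency relation.

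Finally, for the axis-parallel square variant we apply the same construction, replacing ``smallest disk containing $p$'' by ``smallest square containing $p$''. The planarity analysis is then even simpler because any two axis-parallel squares intersect in an axis-parallel rectangle, and the case enumeration for crossing edges collapses to a handful of subcases each of which again contradicts minimality. Hence Theorem~\ref{thm:ptas} applies in both geometric settings, yielding the claimed PTAS.
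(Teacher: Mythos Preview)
Your overall plan---run local search, build a planar locality graph on $B\cup R$, and invoke Theorem~\ref{thm:ptas}---matches the paper, but the planar graph you construct is not the paper's, and your planarity argument has a genuine gap.

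The paper does \emph{not} use ``smallest covering disk'' pairs. It forms the additively weighted Voronoi diagram of the centers $B\cup R$ with respect to the distance $\delta(x,c)=d(x,c)-r_c$ and lets $G$ be the bichromatic part of the dual of that diagram. After discarding disks contained in other disks, every cell is nonempty, connected, and contains its own center, so planarity of $G$ is immediate. The nontrivial step is the \emph{locality} condition: for $p\in P$ lying in, say, a red cell, one walks along the segment from $p$ toward its $\delta$-nearest blue center $b$ and shows via the triangle inequality that the cell adjacent to $cell(b)$ along this segment belongs to a red center $c$ with $p\in D_c$; the dual edge $(b,c)$ then witnesses $p$. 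Squares are handled by the same argument under the $L_\infty$ metric.

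Your polyline drawing $c(b_p)\to p\to c(r_p)$ is not crossing-free, and a crossing does not in general contradict minimality. Take $b_p$ the unit disk at the origin with $p=(0.9,0)$ and $b_q$ the radius-$2$ disk centered at $(0.5,-1)$ with $q=(0.5,0.9)$: the blue half-edges $\overline{c(b_p)\,p}$ and $\overline{c(b_q)\,q}$ cross at $(0.5,0)$, yet $q\notin b_p$ (since $|q|=\sqrt{1.06}>1$), so although $b_p$ is the smaller blue disk it does not cover $q$ and no minimality is violated. Completing the picture with $r_p$ of radius $1.2$ at $(2,0)$ and $r_q$ of radius $1.3$ at $(0.5,2)$ forces all four ``smallest'' choices uniquely while the two full polyline edges still properly cross. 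Thus the ``short case analysis'' you invoke does not close as stated, and the same objection applies verbatim to your square variant. Whether your abstract graph is nevertheless planar is a separate question your sketch does not settle; the weighted-Voronoi construction sidesteps the issue entirely by making planarity structural and shifting the real work to the locality verification.
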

\begin{proof}
We may assume that there do not exist two disks in $D$, such that one of them is contained in the other, since, if two such disks do exist, we can always prefer the larger one.
Let $B$ be the set of the centers of the disks returned by the local search algorithm, and let $R$ be the set of the centers of the disks in an optimal solution.
We may assume that $B \cap R = \emptyset$ (since otherwise, we can remove the centers that belong to both sets).
For a center $c \in B \cup R$, let $D_c \in D$ denote the corresponding disk and $r_c$ its radius.
In order to prove that $|B| \leq (1+\varepsilon)|R|$, we need to present a planar bipartite graph $G=(B \cup R,E)$ satisfying the following locality condition: For each $p \in P$ there exist $b \in B$ and $r \in R$, such that $ p \in D_b$, $p \in D_r$, and $(b,r) \in E$.

Consider the additively weighted Voronoi diagram of $B \cup R$, constructed according to the distance function $\delta(p, c) = d(p, c) - r_c$.
Observe first that for any center $c \in B \cup R$ and point $p \in \reals^2$, if $\delta(p, c) \leq 0$, then $p \in D_c$ (and vice versa).
Now, let $cell(c)$ denote the cell of the diagram corresponding to $c \in B \cup R$.
It is well known that for each $c \in B \cup R$, $c \in cell(c)$ and $cell(c)$ is connected.

\old{
We prove two simple and known claims concerning the diagram.
\gila{I think we can say here ``we claim that ... . Indeed, ..... .'' instead of using enumerated claims, moreover, if we say that the claims are known, we do not need to prove them, right?!.}
\begin{myclaim}\label{lemma:cell}
For each $c \in B \cup R$, $c \in cell(c)$.
\end{myclaim}
\begin{proof}
Assume that $c \notin cell(c)$, for some $c \in B \cup R$, and let $c' \in B \cup R$, such that $c \in cell(c')$. Then,
$\delta(c, c') < \delta(c, c)$, or $d(c,c')-r_{c'} < -r_c$, which implies that $D_c \subseteq D_{c'}$ --- a contradiction.
\end{proof}

\begin{myclaim}\label{lemma:connected}
For each $c \in B \cup R$, $cell(c)$ is connected.
\end{myclaim}
\begin{proof}
Assume that $cell(c)$ is not connected, for some $c \in B \cup R$. Then, $cell(c)$ consists of two or more connected regions, such that any two of them are disconnected. Let $C_1$ be the region of $cell(c)$, such that $c \in C_1$. Let $C_2$ be any other region of $cell(c)$ and let $q \in C_2$. The line segment $\segment{q}{c}$ must pass through the interior of some other cell $cell(c')$. Let $p \in cell(c')$ be a point on $\segment{q}{c}$ and in the interior of $cell(c')$.
Then, $\delta(p,c') < \delta(p,c)$ and $\delta(q,c) < \delta(q,c')$.
Therefore, $d(p,c') - r_{c'} < d(p,c) - r_c = d(q,c) - d(q,p) - r_c = \delta(q,c) - d(q,p) < \delta(q,c') - d(q,p) = d(q,c') - r_{c'} - d(q,p) \leq d(p,c') - r_{c'}$, which is of course impossible.
\end{proof}
}

Our planar bipartite graph is the dual graph of the weighted Voronoi diagram defined above, without the monochromatic edges.
That is, there is an edge between two centers $c$ and $c'$ in $B \cup R$, if and only if $c \in B$ and $c' \in R$, or vice versa, and their cells are adjacent to each other. We denote this graph by $G=(R \cup B, E)$. It is easy to see that $G$ is planar. Indeed, by the diagram's properties, any edge $(c,c') \in E$ can be drawn such that it is contained in $cell(c) \cup cell(c')$, and within each cell $cell(c)$, it is easy to ensure that the edges do not cross each other.

We now show that $G$ satisfies the locality condition.
Let $p \in P$, and assume w.l.o.g. that $p \in cell(r)$, where $r \in R$. Let $b \in B$ be the closest center to $p$ according to $\delta$. That is, for any $b' \in B$, $\delta(p, b) \leq \delta(p, b')$.
Notice that $\delta(p, b) \le 0$, since there exists a center $b' \in B$ whose disk covers $p$, and therefore, $\delta(p, b') \leq 0$. We conclude by the observation
above that $p \in D_b$.
Consider the cells we visit when walking along the line segment $\segment{p}{b}$ from $p$ to $b$.
Since $p \notin cell(b)$ and $b \in cell(b)$, we must at some point enter $cell(b)$.
Let $c \in B \cup R$ be the center for which $cell(c)$ is the last cell that we visit before entering $cell(b)$, and let $q$ be the point on $\segment{p}{b}$, which is also on the boundaries of $cell(c)$ and $cell(b)$.

It remains to show that $c \in R$, implying that $(b,c) \in E$, and that $p \in D_c$.
Using the triangle inequality and since the centers of the disks are in general position, we get that $\delta(p,c) = d(p,c) - r_c < d(p,q) + d(q,c) - r_c =  d(p,q) + \delta(q,c)$. But, $d(p,q) + \delta(q,c) = d(p,q) + \delta(q,b) = d(p,q) + d(q,b) - r_b = d(p,b) - r_b = \delta(p,b)$, so we get that
$\delta(p,c) < \delta(p,b)$. Now, since $b$ is the closest center to $p$ among the centers in $B$, we conclude that $c \in R$, and since $\delta(p,b) \le 0$, we conclude that $p \in D_c$.
The axis-parallel version is obtained by replacing the $L_2$ metric by the $L_{\infty}$ metric.
\end{proof}

\subsection{Discrete coverage of a polygon}
Consider now the following problem. Let $Q$ be a polygon, and let $D$ be a set of disks (alt., a set of axis-parallel rectangles). Find a minimum-cardinality subset of $D$ that covers $Q$. The local search algorithm can be easily adapted to this setting. At each iteration of the algorithm, instead of checking whether all points are covered, one needs to check whether the entire polygon is covered. This can be done in polynomial time.
The analysis is essentially the same as for discrete coverage of points, except that here $p$ is any point in the polygon $Q$.
We thus conclude that
\begin{theorem}
There exists a PTAS for discrete coverage of a polygon by disks (alt., by axis-parallel squares).
\end{theorem}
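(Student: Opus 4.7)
The plan is to show that the argument used for Theorem~\ref{thm:diskcoverage} transfers to the continuous setting with only minor changes, so only the two genuinely new ingredients need discussion: the polynomial-time verification step in the local-search loop, and the locality argument over an infinite point set.

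First I would describe the local search algorithm for this problem. Starting from any feasible cover $D_0 \subseteq D$ (e.g.\ $D_0 = D$), each iteration looks for subsets $X' \subseteq D_0$, $X'' \subseteq D$ with $|X'| = k$, $|X''| = k-1$, such that $(D_0 \setminus X') \cup X''$ still covers $Q$. The verification ``does a given subfamily $D'' \subseteq D$ cover $Q$?'' can be performed in polynomial time by computing the arrangement of $\partial Q$ together with the boundaries of the disks (or squares) in $D''$, and checking, for each face of the portion lying inside $Q$, whether some disk of $D''$ contains a point of that face; one can also simply walk around the boundary of the uncovered region $Q \setminus \bigcup_{d \in D''} d$, which has complexity polynomial in $|D|$. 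Since $k$ is a constant, only polynomially many $(X', X'')$ pairs are examined per iteration, and the number of iterations is polynomial, yielding a polynomial total running time.

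Next I would import the analysis of Theorem~\ref{thm:diskcoverage} almost verbatim. Let $B$ and $R$ be the (centers of) the disks in the local-search and optimal covers, assumed disjoint, and form the additively weighted Voronoi diagram of $B \cup R$ with the same distance function $\delta(p,c) = d(p,c) - r_c$. The diagram's cells are connected and contain their sites, so the dual bipartite graph $G = (B \cup R, E)$ (with monochromatic edges removed) is planar, exactly as before. The only substantive change is in establishing the locality condition: instead of quantifying over $p \in P$, I would quantify over an arbitrary $p \in Q$. For such a $p$, both the optimal cover and the local-search cover have some disk containing $p$, so one repeats the argument word for word: assume $p \in cell(r)$ for some $r \in R$, let $b \in B$ minimize $\delta(p, b')$ over $b' \in B$ (so $\delta(p,b) \le 0$ and thus $p \in D_b$), walk along $\segment{p}{b}$ until just before entering $cell(b)$, and identify the last cell $cell(c)$ before $cell(b)$. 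The same triangle-inequality computation shows $\delta(p,c) < \delta(p,b) \le 0$, so $p \in D_c$ and $c \in R$ (by minimality of $b$ within $B$), giving a blue-red edge $(b,c) \in E$ whose endpoints both cover $p$.

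With planarity and the locality condition established, Theorem~\ref{thm:ptas} applied to the family of planar graphs ($\delta = 1/2$) yields $|B| \le (1+\varepsilon)|R|$ for an appropriate choice of $k$, proving the PTAS. For the axis-parallel-square variant, as in Theorem~\ref{thm:diskcoverage}, one replaces the Euclidean metric by $L_\infty$; the additively weighted $L_\infty$ Voronoi diagram still has connected cells containing their sites and a planar dual, and the identical argument goes through. I do not expect any real obstacle: the potentially delicate point is the polynomial-time coverage test, but this is routine; the locality argument of Theorem~\ref{thm:diskcoverage} never used discreteness of the covered set and so applies unchanged to every $p \in Q$.
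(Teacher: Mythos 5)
Your proposal is correct and follows essentially the same route as the paper: the paper likewise adapts the local search by replacing the per-point coverage check with a polynomial-time test that the whole polygon is covered, and then reruns the weighted-Voronoi locality argument of the point-coverage theorem verbatim with $p$ ranging over all of $Q$. Your write-up merely spells out the details (arrangement-based verification, the walk along $\segment{p}{b}$) that the paper leaves implicit.
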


\subsection{The class cover problem}

The class cover problem is defined as follows: Let $B$ be a set of blue points and let $R$ be a set of red points and set $n = |B|+|R|$. Find a minimum-cardinality set $D$ of disks (alt., of axis-parallel squares) that covers the blue points, but does not cover any of the red points. That is, find a minimum-cardinality set $D$, such that $B \cap \cup_{d \in D} d = B$ and $R \cap \cup_{d \in D} d = \emptyset$.
Bereg et al.~\cite{BeregCDPSV10} study several versions of the class cover problem with boxes. In particular, they prove that the class cover problem  with axis-parallel squares is NP-hard and give an $O(1)$-approximation algorithm for this version. We show that the class cover problem with squares (resp., disks) is essentially equivalent to discrete coverage of points by squares (resp., disks), and therefore there exists a PTAS for both versions.

Indeed, consider any set $D$ of disks that cover the points in $B$ and does not cover any point in $R$. It is easy to see that one can replace each disk $d \in D$ with a ``legal'' disk $d'$, such that $B \cap d \subseteq B \cap d'$ and either there are three points on $d'$'s boundary, or there are exactly two points on $d'$'s boundary and the line segment between them is a diameter of $d'$. Therefore, we can transform the class cover problem with disks to discrete coverage of points by disks, by first computing all $O(n^3)$ disks defined by either triplets or pairs of points and then removing those that are ``illegal''. Similarly, we can transform the class cover problem with squares to discrete coverage of points by squares. We thus obtain
\begin{theorem}
There exists a PTAS for the class cover problem with disks (alt., with axis-parallel squares).
\end{theorem}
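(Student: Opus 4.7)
The plan is to reduce the class cover problem to the discrete coverage of points problem covered by Theorem~\ref{thm:diskcoverage}. The entire reduction hinges on showing that, without loss of generality, every disk used in any feasible class cover solution may be taken to be \emph{canonical}, meaning either its boundary passes through three points of $B \cup R$, or its boundary passes through two points of $B \cup R$ that are diametrically opposite. Canonicalization collapses the uncountable universe of covering disks to a polynomial-size candidate family, turning class cover into a discrete coverage instance on the blue set $B$.

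First, I would establish the canonicalization claim. Given any feasible disk $d$ covering a subset $B_d \subseteq B$ and containing no red point, I would continuously shrink and translate $d$ while maintaining the invariants that every point of $B_d$ stays inside and no point of $R$ enters. A disk has three degrees of freedom, and each point of $B \cup R$ pinned on the boundary removes one. Because shrinkage and translation cannot pull an exterior red point across the boundary, the invariants are preserved throughout the process, which terminates when three independent boundary constraints are active. This yields a canonical disk $d'$ with $B_d \subseteq B \cap d'$ and $d' \cap R = \emptyset$. Iterating over the disks of any optimum produces a canonical optimum of the same size.

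Second, I would enumerate all $O(n^3)$ disks defined by triples of points of $B \cup R$ together with all $O(n^2)$ disks defined by pairs of points of $B \cup R$ taken as a diameter, then discard every ``illegal'' disk whose interior meets $R$. Call the resulting collection $D'$. By the canonicalization claim there is a minimum-cardinality class cover whose disks all lie in $D'$, so the class cover problem is equivalent to finding a minimum-cardinality subset of $D'$ that covers $B$. This is exactly an instance of discrete coverage of points by disks, and Theorem~\ref{thm:diskcoverage} supplies a PTAS. The axis-parallel-squares version is handled identically: the canonical squares are determined by a constant number of points of $B \cup R$, there are only polynomially many of them, and the $L_\infty$ case of Theorem~\ref{thm:diskcoverage} applies.

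I expect the canonicalization step to be the only real technical point, and it is a standard ``shrinking'' argument. The delicate issue is to argue that the three-degree-of-freedom reduction terminates at a disk of the prescribed canonical form rather than, say, at a disk pinned by only two non-diametric boundary points. This is handled by noting that whenever a disk is pinned by only two non-diametric boundary points it retains one free degree of motion that can further strictly decrease the radius without expelling a pinned point, so any terminal configuration must be either three-point or diametric.
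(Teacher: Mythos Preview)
Your reduction is exactly the paper's: canonicalize each feasible disk to one determined by two or three points of $B\cup R$, enumerate the $O(n^{3})$ candidates, discard those containing a red point, and invoke Theorem~\ref{thm:diskcoverage} on the resulting discrete-coverage instance (and analogously for squares via the $L_\infty$ case). One minor inaccuracy worth fixing: the sentence ``shrinkage and translation cannot pull an exterior red point across the boundary'' is not true once two points are pinned and the center slides along their perpendicular bisector---the moving disk need not remain inside the original---but this is harmless, since a red point reaching the boundary simply furnishes the third boundary constraint, and the process halts there with a legal canonical disk.
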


\noindent
{\bf Concluding remark.}
Very recently it has been brought to our attention that Chan and Grant~\cite{ChanGrant} observe that the PTAS of Mustafa and Ray~\cite{MR10} for discrete hitting set of half-spaces in $\reals^3$ implies a PTAS for discrete coverage of points by disks, by a lifting transformation that maps disks to lower half-spaces in $\reals^3$ and by duality between points and half-spaces.
However, our PTAS above for discrete coverage of points by disks is direct and refrains from moving to $\reals^3$.
 
\old{
Bereg et al.~\cite{BeregCDPSV10} claim that by a result of Cannon and Cowen~\cite{CannonC04} there is a PTAS for the class cover problem with disks. However, Cannon and  Cowen~\cite{CannonC04} only consider the case where the disks are with equal radius, i.e., unit disks, and it is not clear how their result yields a PTAS for the general case. For the class coverage problem with squares Bereg et al.~\cite{BeregCDPSV10} show an $O(1)$-approximation. We show that since the class coverage problem is equivalent to discrete coverage problem, thus the local search algorithm is a PTAS for the class coverage with either disks or axis-parallel squares.
Notice that for any solution $D'$ for the class coverage problem with disks, one can expand each disk $d \in D'$, without changing the points that it covers and without adding red points, such that: (i) $d$  contains three red points on its boundary. (ii) $d$ contains two red points on its boundary and no additional expansion will add any blue point, i.e. such a disks can be defined by two red points and one blue.
Therefore, only $O(n^3)$ disks should be considered when solving the class coverage problem. In order to solve the problem, one should compute all $n \choose 3$ disks defined by three red points and $ {n \choose 2} \cdot {n \choose 1}$ disks defined by two red points and some blue point, and choose the maximal by containment of blue points disks which doesn't contain a red point. Then, one should compute a minimum cardinality subset of those disks  that covers $B$. This is the discrete coverage problem, for which in Section~\ref{sec:coverage} we show a PTAS.
}


\old{
\newpage
\appendix
\section{Proof of Theorem 5}\label{app:max}
We prove below that if $G$ belongs to some family ${\cal F}$ that has a separator property, then $|B|$ is not much smaller than $|R|$.
We first construct an $r$-division of $G$, for $r = \frac{k}{c_2+c_3}$, and set $B_i= B \cap V_i$ and $R_i= R \cap V_i$.

\begin{myclaim}
\label{claim:substition_max}
For any index $i$, the set $(B \setminus (B_i\cup \Gamma(V_i))) \cup R_i$ is also a solution for ${\cal P}$.
\end{myclaim}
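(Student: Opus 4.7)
The plan is to verify that the candidate replacement $(B \setminus (B_i \cup \Gamma(V_i))) \cup R_i$ still satisfies the packing-type feasibility required by ${\cal P}$. Since the graph $G$ is defined so that $b \in B$ and $r \in R$ are adjacent precisely when they intersect, being a solution amounts to having no two chosen objects intersect. So the task reduces to showing that the new set is pairwise non-intersecting.

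First I would observe that the internal consistency comes for free on each side. The subset $B \setminus (B_i \cup \Gamma(V_i))$ is contained in the feasible set $B$, so its elements are pairwise non-intersecting, and similarly $R_i \subseteq R$ inherits pairwise non-intersection from the feasibility of $R$. Hence the only way feasibility could fail is through a ``cross'' intersection between some $b \in B \setminus (B_i \cup \Gamma(V_i))$ and some $r \in R_i$.

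The main step is to rule out this last possibility using the definition of $G$ together with the definition of $\Gamma(V_i)$. Suppose, toward a contradiction, that such a $b$ and $r$ intersect. Then $(b,r) \in E$ by the defining property of $G$ stated just before the claim. Since $r \in R_i \subseteq V_i$, the vertex $b$ is a neighbor of a vertex of $V_i$; by the definition of $\Gamma(V_i)$ as the set of vertices outside $V_i$ that are adjacent to $V_i$, this forces either $b \in V_i$ (in which case $b \in B \cap V_i = B_i$) or $b \in \Gamma(V_i)$. Either way, $b \in B_i \cup \Gamma(V_i)$, contradicting the assumption $b \in B \setminus (B_i \cup \Gamma(V_i))$.

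I do not anticipate any real obstacle here: the claim is essentially a bookkeeping check, and the only slightly delicate point is that the graph $G$ in the maximization setup is required to have edges corresponding exactly to blue--red intersections (not only to some subset of them). Because the hypothesis on $G$ is stated as an ``if and only if,'' the argument above goes through directly, and the claim follows.
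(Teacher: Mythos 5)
Your argument is correct and is essentially the paper's own proof: both reduce the claim to showing that no $r \in R_i$ can intersect any $b \in B \setminus (B_i \cup \Gamma(V_i))$, which follows because every neighbor of a vertex of $V_i$ lies in $V_i \cup \Gamma(V_i)$. The only cosmetic difference is that you spell out feasibility as pairwise non-intersection (true for the packing instances the framework targets, though strictly an over-specification for, say, $l$-shallowness with $l>1$), whereas the paper leaves that final inference at the abstract level of ${\cal P}$; the substance is identical.
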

\begin{proof}
Fix $i$, and let $r \in R_i$. By definition, $r$ intersects only its neighbors in $G$, i.e., only elements in $(B_i \cup \Gamma(V_i)) \cup R_i$. Thus, for each $b \in B \setminus (B_i\cup \Gamma(V_i)))$, $r \cap b =\emptyset$.
Therefore, $(B \setminus (B_i\cup \Gamma(V_i))) \cup R_i$ is also a solution for ${\cal P}$.
\end{proof}

\old{
\begin{myclaim}
\label{claim:size_max}
For any index $i$, $|V_i| + |\Gamma (V_i)| < k$.
\end{myclaim}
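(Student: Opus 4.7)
The plan is to combine the two size bounds from the $r$-division property of the family ${\cal F}$ with the chosen value $r = k/(c_2+c_3)$, which was fixed at the beginning of the maximization analysis. Specifically, from the $r$-division guarantees I will use (i) $|V_i| \le c_2 r$ and (ii) $|\Gamma(V_i)| \le c_3 r^{1-\delta}$. Adding these gives the single inequality
\[
|V_i| + |\Gamma(V_i)| \;\le\; c_2 r + c_3 r^{1-\delta},
\]
after which everything is just a substitution.

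Substituting $r = k/(c_2+c_3)$, the first term becomes exactly $c_2 k/(c_2+c_3)$, so the goal reduces to showing that the remaining term $c_3 r^{1-\delta}$ is strictly less than $c_3 k/(c_2+c_3) = c_3 r$. Since $\delta > 0$, the inequality $r^{1-\delta} < r$ holds as soon as $r > 1$, i.e., as soon as $k > c_2 + c_3$, which is guaranteed by the hypothesis that $k$ is a sufficiently large constant. Combining the two bounds then yields
\[
|V_i| + |\Gamma(V_i)| \;<\; \frac{c_2 k}{c_2+c_3} + \frac{c_3 k}{c_2+c_3} \;=\; k,
\]
as required.

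There is no real obstacle here: the claim is a direct arithmetic consequence of the $r$-division definition together with the choice of $r$, and the only hidden assumption is the quantification ``$k$ sufficiently large'', which is explicitly allowed in the setup. In fact, this claim is the verbatim analogue of Claim~\ref{claim:size} from the minimization section, and the same one-line computation covers it; I would simply remark that the argument is identical to the one used there and reuse Claim~\ref{claim:size} directly rather than reproving it.
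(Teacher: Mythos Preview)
Your proof is correct and follows exactly the same approach as the paper: add the two $r$-division bounds $|V_i|\le c_2 r$ and $|\Gamma(V_i)|\le c_3 r^{1-\delta}$, substitute $r=k/(c_2+c_3)$, and use $k>c_2+c_3$ to conclude strict inequality. Your final remark is also spot on --- in the paper this claim is literally Claim~\ref{claim:size}, and the maximization section simply reuses it rather than restating a separate proof.
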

\begin{proof}
\begin{eqnarray}
\lefteqn{|V_i \cup \Gamma(V_i) |\leq c_2 r + c_3 r^{1-\delta} \leq \frac{c_2 k}{c_2 +c_3} + \frac{c_3 k^{1-\delta}} {(c_2+c_3)^{1-\delta}}} \nonumber \\
&=& \frac{c_2 k + c_3  (c_2 +c_3)^\delta  k^{1-\delta}}{c_2+c_3} < \frac {c_2 k + c_3 k}{c_2 + c_3}  = k \ . \nonumber
\end{eqnarray}
\end{proof}
}

\begin{myclaim}
For any index $i$, $|R_i| \le |B_i|+|\Gamma(V_i)| < k$.
\end{myclaim}
\begin{proof}
By Claim~\ref{claim:substition_max}, the set $(B \setminus (B_i \cup \Gamma(V_i))) \cup R_i$ is also a solution for ${\cal P}$.
Moreover, by Claim~\ref{claim:size}, $|B_i \cup \Gamma(V_i)| = |B_i|+|\Gamma(V_i)| \leq |V_i| + |\Gamma (V_i)| < k$. So, if $|R_i|>|B_i \cup \Gamma(V_i)|$, then the local search algorithm would have replaced $B_i \cup \Gamma(V_i)$ by $R_i$ (or a subset of $R_i$ of size $k$, if $|R_i| > k$) before halting. Since it has not done so, we conclude that $|R_i| \le |B_i|+|\Gamma(V_i)|$.
\end{proof}

\noindent {\bf Theorem~\ref{thm:ptas_max}.} 
{\it
For any $\varepsilon$, $0 < \varepsilon < 1$, one can choose a constant $k$, such that
$|B|\ge(1-\varepsilon)|R|$.
}

\begin{proof}
Set $c'=2c_1(c_2+c_3)^\delta$, $k=(c'(\frac{2}{\varepsilon} -1))^{1/ \delta} $, and recall that $r=\frac {k}{c_2+c_3}$. Then,
\old{
\begin{eqnarray}
\lefteqn{|R|\leq |S|+\sum_i |R_i| \leq |S| + \sum_i |B_i| + \sum _i |\Gamma(V_i)|} \nonumber \\
& \leq& \frac {c_1 n}{r^\delta} + |B|+ \frac {c_1 n}{r^\delta} = |B|+ \frac {2c_1 n}{r^\delta} \nonumber \\
&=& |B|+ \frac {|R|+|B|}{k^ \delta} 2c_1(c_2+c_3)^\delta  \nonumber \\
&=& |B|+\frac{1}{\frac{2}{\varepsilon} -1} (|R|+|B|) \ \nonumber \\
&=& |B|+\frac{\varepsilon}{2 -\varepsilon} (|R|+|B|) \ . \nonumber
\end{eqnarray}
}
\begin{align*}
|R| &\leq |S|+\sum_i |R_i| \leq |S| + \sum_i |B_i| + \sum _i |\Gamma(V_i)|
 \leq \frac {c_1 n}{r^\delta} + |B|+ \frac {c_1 n}{r^\delta} = |B|+ \frac {2c_1 n}{r^\delta} \\
&= |B|+ \frac {|R|+|B|}{k^ \delta} 2c_1(c_2+c_3)^\delta
= |B|+\frac{1}{\frac{2}{\varepsilon} -1} (|R|+|B|)
= |B|+\frac{\varepsilon}{2 -\varepsilon} (|R|+|B|) \ .
\end{align*}

Rearranging, we get that $|B|\ge(1-\varepsilon)|R|$.
\end{proof}

\section{Proof of Theorem 9}\label{app:cover}

We may assume that there do not exist two disks in $D$, such that one of them is contained in the other, since, if two such disks do exist, we can always prefer the larger one.
Let $B$ be the set of the centers of the disks returned by the local search algorithm, and let $R$ be the set of the centers of the disks in an optimal solution.
We may assume that $B \cap R = \emptyset$ (since otherwise, we can remove the centers that belong to both sets).
For a center $c \in B \cup R$, let $D_c \in D$ denote the corresponding disk and $r_c$ its radius.
In order to prove that $|B| \leq (1+\varepsilon)|R|$, we need to present a planar bipartite graph $G=(B \cup R,E)$ satisfying the following locality condition: For each $p \in P$ there exist $b \in B$ and $r \in R$, such that $ p \in D_b$, $p \in D_r$, and $(b,r) \in E$.

Consider the additively weighted Voronoi diagram of $B \cup R$, constructed according to the distance function $\delta(p, c) = d(p, c) - r_c$.
Observe first that for any center $c \in B \cup R$ and point $p \in \reals^2$, if $\delta(p, c) \leq 0$, then $p \in D_c$ (and vice versa).
Now, let $cell(c)$ denote the cell of the diagram corresponding to $c \in B \cup R$. We prove two simple and known claims concerning the diagram.

\begin{myclaim}\label{lemma:cell}
For each $c \in B \cup R$, $c \in cell(c)$.
\end{myclaim}
\begin{proof}
Assume that $c \notin cell(c)$, for some $c \in B \cup R$, and let $c' \in B \cup R$, such that $c \in cell(c')$. Then,
$\delta(c, c') < \delta(c, c)$, or $d(c,c')-r_{c'} < -r_c$, which implies that $D_c \subseteq D_{c'}$ --- a contradiction.
\end{proof}

\begin{myclaim}\label{lemma:connected}
For each $c \in B \cup R$, $cell(c)$ is connected.
\end{myclaim}
\begin{proof}
Assume that $cell(c)$ is not connected, for some $c \in B \cup R$. Then, $cell(c)$ consists of two or more connected regions, such that any two of them are disconnected. Let $C_1$ be the region of $cell(c)$, such that $c \in C_1$. Let $C_2$ be any other region of $cell(c)$ and let $q \in C_2$. The line segment $\segment{q}{c}$ must pass through the interior of some other cell $cell(c')$. Let $p \in cell(c')$ be a point on $\segment{q}{c}$ and in the interior of $cell(c')$.
Then, $\delta(p,c') < \delta(p,c)$ and $\delta(q,c) < \delta(q,c')$.
Therefore, $d(p,c') - r_{c'} < d(p,c) - r_c = d(q,c) - d(q,p) - r_c = \delta(q,c) - d(q,p) < \delta(q,c') - d(q,p) = d(q,c') - r_{c'} - d(q,p) \leq d(p,c') - r_{c'}$, which is of course impossible.
\end{proof}

\paragraph{A planar bipartite graph.}
Our graph is the dual graph of the weighted Voronoi diagram defined above, without the monochromatic edges.
That is, there is an edge between two centers $c$ and $c'$ in $B \cup R$, if and only if $c \in B$ and $c' \in R$, or vice versa, and their cells are adjacent to each other. We denote this graph by $G=(R \cup B, E)$. It is easy to see that $G$ is planar. Indeed, by Claim~\ref{lemma:connected}, any edge $(c,c') \in E$ can be drawn such that it is contained in $cell(c) \cup cell(c')$, and within each cell $cell(c)$, it is easy to ensure that the edges do not cross each other.

We now show that $G$ satisfies the locality condition.
Let $p \in P$, and assume w.l.o.g. that $p \in cell(r)$, where $r \in R$. Let $b \in B$ be the closest center to $p$ according to $\delta$. That is, for any $b' \in B$, $\delta(p, b) \leq \delta(p, b')$.
Notice that $\delta(p, b) \le 0$, since there exists a center $b' \in B$ whose disk covers $p$, and therefore, $\delta(p, b') \leq 0$. We conclude, by the observation just above Claim~\ref{lemma:cell}, that $p \in D_b$.
Consider the cells we visit when walking along the line segment $\segment{p}{b}$ from $p$ to $b$.
Since $p \notin cell(b)$ and $b \in cell(b)$ (Claim~\ref{lemma:cell}), we must at some point enter $cell(b)$.
Let $c \in B \cup R$ be the center for which $cell(c)$ is the last cell that we visit before entering $cell(b)$, and let $q$ be the point on $\segment{p}{b}$, which is also on the boundaries of $cell(c)$ and $cell(b)$.

It remains to show that $c \in R$, implying that $(b,c) \in E$, and that $p \in D_c$.
Using the triangle inequality and since the centers of the disks are in general position, we get that $\delta(p,c) = d(p,c) - r_c < d(p,q) + d(q,c) - r_c =  d(p,q) + \delta(q,c)$. But, $d(p,q) + \delta(q,c) = d(p,q) + \delta(q,b) = d(p,q) + d(q,b) - r_b = d(p,b) - r_b = \delta(p,b)$, so we get that
$\delta(p,c) < \delta(p,b)$. Now, since $b$ is the closest center to $p$ among the centers in $B$, we conclude that $c \in R$, and since $\delta(p,b) \le 0$, we conclude that $p \in D_c$.
The following theorem summarizes our result, where the axis-parallel version is obtained by replacing the $L_2$ metric by the $L_{\infty}$ metric.

\vspace{5mm}\noindent {\bf Theorem~\ref{thm:diskcoverage}.} 
{\it 
There exists a PTAS for discrete coverage of points by disks (alt., by axis-parallel squares).
}
}

\old{
\begin{theorem}\label{thm:diskcoverage_appendix}
There exists a PTAS for discrete coverage of points by disks.
\end{theorem}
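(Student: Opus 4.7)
The plan is to invoke the minimization framework of Section~\ref{sect:minimization_problems} (or rather, the original planar version of it from~\cite{ChanH09,MR10}, since here the graph turns out to be planar). Let $B$ be the set of centers of disks returned by local search and $R$ the set of centers of an optimal cover; assume $B\cap R=\emptyset$. Since an enclosing disk dominates an enclosed one, we may also assume no disk in $D$ is contained in another. The only nontrivial task is to exhibit a planar bipartite graph $G=(B\cup R,E)$ with no monochromatic edges that satisfies the locality condition: for every $p\in P$ there must exist $b\in B$ and $r\in R$ with $p\in D_b\cap D_r$ and $(b,r)\in E$. Once this is done, Theorem~\ref{thm:ptas} (with $\delta=1/2$, coming from the planar separator theorem) immediately yields $|B|\le(1+\varepsilon)|R|$.

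To build such a graph, I would follow the Gibson--Pirwani idea and use the \emph{additively weighted Voronoi diagram} of $B\cup R$ with weight function $\delta(p,c)=d(p,c)-r_c$. The key observation is that $\delta(p,c)\le 0$ iff $p\in D_c$. Let $cell(c)$ denote the cell of $c$. Two standard (and easy) facts that I would verify first: (a)~each $c$ lies in its own cell, and (b)~each $cell(c)$ is connected. Fact (a) follows because $c\notin cell(c)$ would force $\delta(c,c')<\delta(c,c)$ for some other $c'$, i.e.\ $D_c\subseteq D_{c'}$, contradicting our assumption. Fact (b) follows by a short contradiction argument along a segment joining two would-be components of $cell(c)$ through a cell of some $cell(c')$, using the triangle inequality on $\delta$. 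The graph $G$ is then the dual graph of the diagram, with all monochromatic edges deleted. Planarity is a consequence of (a) and (b): every edge $(c,c')$ can be drawn inside $cell(c)\cup cell(c')$ through the shared boundary, and within each cell these edge-pieces can be routed to $c$ without crossings.

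The main obstacle, and the heart of the proof, is verifying the locality condition. Given $p\in P$, I would place $p$ in some cell $cell(r)$; by a symmetry/relabeling argument we may assume $r\in R$. Now let $b\in B$ be the $\delta$-nearest center of $B$ to $p$. Because $P$ is covered by the disks whose centers are $B$, there is some $b'\in B$ with $\delta(p,b')\le 0$, so $\delta(p,b)\le 0$ and hence $p\in D_b$. Walk along the segment $\overline{pb}$ from $p$ (in $cell(r)$) to $b$ (in $cell(b)$ by fact~(a)); let $cell(c)$ be the last cell visited before entering $cell(b)$, and $q$ the crossing point on the shared boundary, so $\delta(q,c)=\delta(q,b)$. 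By the triangle inequality (strict, by general position),
$$\delta(p,c)=d(p,c)-r_c<d(p,q)+d(q,c)-r_c=d(p,q)+\delta(q,b)=d(p,q)+d(q,b)-r_b=\delta(p,b).$$
Since $b$ is $\delta$-nearest among $B$, this forces $c\in R$; and since $\delta(p,c)<\delta(p,b)\le 0$, we get $p\in D_c$. As $cell(c)$ and $cell(b)$ share a boundary piece, $(b,c)\in E$, establishing locality.

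For the axis-parallel square version, the identical argument goes through after replacing the $L_2$ metric by $L_\infty$ (the additively weighted $L_\infty$-Voronoi diagram enjoys the same connectedness and planar-duality properties). A mild extra check, not a real obstacle, is that the local-search step itself runs in polynomial time: for each candidate swap of $O(k)$ disks it suffices to test whether the resulting family still covers $P$, which reduces to a containment test per point. Thus the local search is polynomial, and combined with the locality argument above the PTAS follows.
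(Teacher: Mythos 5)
Your proposal is correct and follows essentially the same route as the paper's own proof: the additively weighted Voronoi diagram with $\delta(p,c)=d(p,c)-r_c$, the two cell facts, the dual graph with monochromatic edges removed, and the walk-along-$\segment{p}{b}$ argument with the triangle inequality are all exactly the paper's argument. No gaps.
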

Finally, by using the $L_{\infty}$ metric instead of the $L_2$ metric in the proof of Theorem~\ref{thm:diskcoverage}, we also obtain that
\begin{theorem}
There exists a PTAS for discrete coverage of points by axis-parallel squares.
\end{theorem}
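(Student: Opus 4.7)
The plan is to reduce the class cover problem to the discrete coverage of points problem, for which Theorem~\ref{thm:diskcoverage} already provides a PTAS. The essential observation is that an optimum class cover may be assumed to use only disks (resp.\ squares) of a \emph{canonical} form, and the total number of canonical candidates is polynomial in $n$. Once the problem is rephrased on this polynomial-size ground set, the PTAS of Theorem~\ref{thm:diskcoverage} can be invoked as a black box.

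First, I would establish a canonicalization step. Let $D$ be any feasible class cover. For each disk $d \in D$, I would continuously deform $d$ (varying both its center and its radius) subject to the constraint that no red point ever enters its interior. Any such deformation can only enlarge $B \cap d$, so the new configuration is still a feasible class cover of the same cardinality. The deformation terminates when further growth is blocked, which happens precisely when either (i) three points of $B \cup R$ lie on the boundary of the resulting disk $d'$, or (ii) two points lie on the boundary and the segment between them is a diameter. The key point is that this terminal $d'$ is determined by a constant-size subset of $B \cup R$.

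Second, I would enumerate the set $\mathcal{D}^*$ of all such \emph{legal} canonical disks: generate the $\binom{n}{3}$ disks determined by triples of points and the $\binom{n}{2}$ disks whose diameter is determined by a pair of points, and discard those that contain a red point in their interior. This yields $|\mathcal{D}^*| = O(n^3)$, and by the canonicalization step $\mathcal{D}^*$ contains a class cover of minimum cardinality. Finding a minimum subset of $\mathcal{D}^*$ that covers $B$ is exactly an instance of discrete coverage of the points $B$ by disks from $\mathcal{D}^*$, so Theorem~\ref{thm:diskcoverage} delivers a $(1+\varepsilon)$-approximation in time polynomial in $|\mathcal{D}^*|$, and hence polynomial in $n$.

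For the axis-parallel square variant, I would repeat the same strategy in the $L_\infty$ metric. A canonical square is one that cannot be further enlarged or translated without admitting a red point in its interior; such a square is again pinned by a constant-size subset of $B \cup R$ (e.g.\ red points touching distinct sides), so the number of canonical candidates is again $O(n^3)$. Discarding the illegal ones and applying the square version of Theorem~\ref{thm:diskcoverage} completes the argument. The main obstacle I anticipate is stating the canonicalization precisely enough in the two geometries: one must identify the correct degrees of freedom (center plus radius for a disk; center plus side length for a square) and verify that the blocking configuration always pins the shape with $O(1)$ points so that the enumeration stays polynomial, while never shrinking the covered blue set during the deformation.
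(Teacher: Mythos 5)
There is a genuine gap: your argument proves a different statement from the one asked. The theorem in question is the axis-parallel--square half of Theorem~\ref{thm:diskcoverage} itself, namely that \emph{discrete coverage of points by axis-parallel squares} admits a PTAS. Your proposal instead establishes the \emph{class cover} theorem, by canonicalizing the disks/squares, enumerating $O(n^3)$ candidates pinned by constantly many points, and then invoking Theorem~\ref{thm:diskcoverage} ``as a black box.'' As a proof of the present statement this is circular --- the square case of Theorem~\ref{thm:diskcoverage} is precisely what you are assuming --- and the reduction machinery you develop is the paper's proof of the (separate) class cover theorem, not of this one. Nowhere do you explain \emph{why} discrete coverage by squares of varying sizes admits a PTAS; that is the entire content of the claim.

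What the paper actually does is rerun the disk argument in the $L_\infty$ metric. An axis-parallel square of side $2r_c$ centered at $c$ is the $L_\infty$ ball $\{p : d_\infty(p,c)\le r_c\}$, so one builds the additively weighted Voronoi diagram of the centers $B\cup R$ under $\delta(p,c)=d_\infty(p,c)-r_c$. After discarding squares contained in other squares, each $cell(c)$ contains $c$ and is connected, so the bichromatic dual graph of the diagram is planar; the locality condition is verified by walking from a point $p$ toward its $\delta$-nearest blue center $b$ and examining the last cell entered before $cell(b)$: its owner $c$ satisfies $\delta(p,c)<\delta(p,b)\le 0$ (the triangle inequality used here holds for $d_\infty$), forcing $c\in R$ and $p\in D_c$. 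Theorem~\ref{thm:ptas} then gives $|B|\le(1+\varepsilon)|R|$. To repair your proposal you would need to supply this $L_\infty$ construction of a planar locality graph (or an equivalent one), rather than citing the result you were asked to prove.
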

}


\begin{thebibliography}{1}
\bibitem{ASTforbiddenminor}
N. Alon, P. Seymour, and R. Thomas.
\newblock A separator theorem for graphs with an excluded minor and its applications.
\newblock {\em Proc. 22nd ACM Sympos. on Theory of Computing}, pages 293--299, 1990.

\bibitem{B94}
B. S. Baker.
\newblock Approximation algorithms for NP-complete problems on planar graphs.
\newblock {\em J. ACM}, 41(1), pages 153--180, 1994.




\bibitem{BeregCDPSV10}
S. Bereg, S. Cabello, J. M. D{\'i}az-B{\'a}{\~n}ez, P. P{\'e}rez-Lantero, C. Seara, and I. Ventura.
\newblock The class cover problem with boxes.
\newblock {\em Comput. Geom. Theory Appl.}, 45(7), pages 294--304, 2012.




\bibitem{Chan03}
T. M. Chan.
\newblock Polynomial-time approximation schemes for packing and piercing fat objects.
\newblock {\em J. Algorithms}, 46(2), pages 178--189, 2003.

\bibitem{ChanGrant}
T. M. Chan and E. Grant.
\newblock Exact algorithms and APX-hardness results for geometric packing and covering problems.
\newblock {\em Comput. Geom. Theory Appl.}, in press, available online.

\bibitem{ChanH09}
T. M. Chan and S. Har-Peled.
\newblock Approximation algorithms for maximum independent set of pseudo-disks.
\newblock {\em Proc. 25th ACM Sympos. on Computational Geometry}, pages 333--340, 2009

\bibitem{CEU95}
D. Z. Chen, V. Estivill-Castro, and J. Urrutia.
\newblock Optimal guarding of polygons and monotone chains.
\newblock {\em Proc. 7th Canadian Conf. on Computational Geometry}, pages 133--138, 1995.



\bibitem{EH-P06}
A. Efrat and S. Har-Peled.
\newblock Guarding galleries and terrains.
\newblock {\em Inf. Process. Lett.}, 100, pages 238--245, 2006.


\bibitem{ESW01}
S. Eidenbenz, C. Stamm, and P. Widmayer.
\newblock Inapproximability results for guarding polygons and terrains.
\newblock {\em Algorithmica}, 31(1), pages 79--113, 2001.


\bibitem{FPseparator}
J. Fox and J. Pach.
\newblock Separator theorems and Turan-type results for planar intersection graphs.
\newblock In Advances in Mathematics 219, pages 1070--1080, 2009.


\bibitem{Fred87}
G. N. Frederickson.
\newblock Fast algorithms for shortest paths in planar graphs.
\newblock {\em SIAM Journal on Computing}, 6, pages 1004--1022, 1987.

\bibitem{Go87}
S. K. Ghosh.
\newblock Approximation algorithms for art gallery problems.
\newblock {\em Proc. Canadian Inform. Process. Soc. Congress}, pages 429--434, 1987.



\bibitem{GKKV09}
M. Gibson, G. Kanade, E. Krohn, and K. R. Varadarajan.
\newblock An approximation scheme for terrain guarding.
\newblock {\em Proc. 12th International Workshop and 13th International Workshop on Approximation, Randomization, and Combinatorial Optimization}, pages 140--148, 2009.



\bibitem{GP10}
M. Gibson and I. A. Pirwani.
\newblock Algorithms for dominating set in disk graphs: Breaking the
log{\it n} barrier.
\newblock {\em Proc. 18th European Sympos. on Algorithms}, 2010.


\bibitem{KK10}
J. King and E. Krohn.
\newblock Terrain guarding is NP-hard.
\newblock {\em Proc. 21st ACM-SIAM Sympos. on Discrete Algorithms}, pages 1580--1593, 2010.

\bibitem{MillerTTV97}
G. L. Miller, S.-H. Teng,  W. P. Thurston, and S. A. Vavasis.
\newblock Separators for sphere-packings and nearest neighbor graphs.
\newblock {\em J. ACM}, 44(1), pages 1--29, 1997.

\bibitem{MR10}
N. H. Mustafa and S. Ray.
\newblock Improved results on geometric hitting set problems.
\newblock {\em Discrete {\&} Computational Geometry}, 44(4), pages 883--895, 2010.



\bibitem{SmithW98}
W. D. Smith and N. C. Wormald.
\newblock Geometric separator theorems {\&} applications.
\newblock {\em Proc. IEEE 39th Sympos. on Foundations of Computer Science}, pages 232--243, 1998.




\old{
\bibitem{Feige98}
U. Feige.
\newblock {A Threshold of ${\ln n}$ for Approximating Set Cover}.
\newblock {\em {Journal of the ACM}}. 1998, 45(4): 634--652.

\bibitem{LY94}
C. Lund and M. Yannakakis.
\newblock {On the Hardness of Approximating Minimization Problems}.
\newblock {\em {Journal of the ACM}}. 1994, 41(5): 960--981.


\bibitem{CLRS01}
T.H. Cormen and C.E. Leiserson and R.L. Rivest and C. Stein.
\newblock {Introduction to Algorithms, Second Edition}.
\newblock  The MIT Press and McGraw-Hill Book Company, 2001.


\bibitem{Hoc96}
D.S. Hochbaum (editor).
\newblock {Approximation Algorithms for {NP}-Hard Problems}.
\newblock {\em PWS Publishing Company}, 1996.



\bibitem{Eiden00}
S. Eidenbenz.
\newblock {(In-)Approximability of Visibility Problems on Polygons and Terrains}.
\newblock PhD thesis. {ETH Z{{\"u}}rich, Switzerland}, 2000.



\bibitem{BG95}
H. Br{{\"o}}nnimann and M.T. Goodrich.
\newblock {Almost Optimal Set Covers in Finite {VC}-Dimension}.
\newblock {\em {Discrete {\&} Computational Geometry}}. 1995, 14(4): 463--479.




\bibitem{Cla93}
K.L. Clarkson.
\newblock {Algorithms for Polytope Covering and Approximation}.
\newblock In Proc. {\em Workshop on Algorithms And Data Structures (WADS)}.
1993, 246--252.



\bibitem{CDDFLS09}
F. Claude, R. Dorrigiv, S. Durocher, R. Fraser, A. L{\'o}pez-Ortiz and A. Salinger.
\newblock {Practical Discrete Unit Disk Cover Using an Exact Line-Separable Algorithm}.
\newblock In Proc. {\em {International Symposium on Algorithms and Computation (ISAAC)}}. 2009, 45--54.



\bibitem{CMWZ04}
G. Calinescu, I. Mandoiu, P.-J. Wan, and A. Zelikovsky.
\newblock Selecting Forwarding Neighbors in Wireless Ad Hoc Networks.
\newblock {MONET}. 2004, 9(2): 101--111.


\bibitem{CKL07b}
P. Carmi, M.J. Katz, and N. Lev-Tov.
\newblock Covering Points by Unit Disks of Fixed Location.
\newblock In Proc. {\em {International Symposium on Algorithms and Computation (ISAAC)}}. 2007, 644-655.




\bibitem{NV06}
S. Narayanappa and  P. Vojt{\v{e}}chovsk{\'y}.
\newblock {An Improved Approximation Factor For The Unit Disk Covering Problem}.
\newblock In Proc. {\em Canadian Conference on Computational Geometry (CCCG)}. 2006.
}


\end{thebibliography}
\end{document}